\newtheorem{theorem}{Theorem}[section]
\newtheorem{lemma}[theorem]{Lemma}
\newtheorem{remark}[theorem]{Remark}
\numberwithin{equation}{section}
\providecommand{\keywords}[1]
{
  \small	
  \textbf{\textit{Keywords---}} #1
}
\newenvironment{proof}[1][Proof]{\noindent\textbf{#1.} }{\ \rule{0.5em}{0.5em}}
\let\pdfoutput=\undefined\fi
\begin{document}
\title{Hierarchical Wilson-Cowan Models and Connection Matrices}
\author{W. A. Z\'{u}\~{n}iga-Galindo\thanks{The author was partially supported by the Lokenath Debnath Endowed Professorship.} \\ wilson.zunigagalindo@utrgv.edu \\
B. A. Zambrano-Luna\\
brian.zambrano@utrgv.edu@utrgv.edu
\and
University of Texas Rio Grande Valley\\
School of Mathematical \& Statistical Sciences\\
One West University Blvd\\
Brownsville, TX 78520, United States.}
\date{}
\maketitle

\begin{abstract}
This work aims to study the interplay between the Wilson-Cowan model and the
connection matrices. These matrices describe the cortical neural wiring, while
the Wilson-Cowan equations provide a dynamical description of neural
interaction. We formulate the Wilson-Cowan equations on locally compact
Abelian groups. We show that the Cauchy problem is well-posed. We then select
a type of group that allows us to incorporate the experimental information
provided by the connection matrices. We argue that the classical Wilson-Cowan
model is incompatible with the small-world property. A necessary condition to
have this property is that the Wilson-Cowan equations be formulated on a
compact group. We propose a $p$-adic version of the Wilson-Cowan model, a
hierarchical version in which the neurons are organized into an infinite
rooted tree. We present several numerical simulations showing that the
$p$-adic version matches the predictions of the classical version in relevant
experiments. The $p$-adic version allows the incorporation of the connection
matrices into the Wilson-Cowan model. We present several numerical simulations
using a neural network model that incorporates a $p$-adic approximation of the
connection matrix of the cat cortex.

\end{abstract}
\keywords{Wilson-Cowan Model, Connection matrices, $p$-adic numbers, Small-world networks.}

\section{Introduction}

This work explores the interplay between the Wilson-Cowan models, the
connection matrices, and the non-Archimedean models of complex systems.

The Wilson-Cowan model describes the evolution of excitatory and inhibitory
activity in a synaptically coupled neuronal network. The model is given by the
following system of non-linear integro-differential evolution equations:%
\[%
\begin{array}
[c]{ll}%
\tau\frac{\partial E(x,t)}{\partial t}= & -E(x,t)+\\
& \\
& \left(  1-r_{E}E(x,t)\right)  S_{E}\left(  w_{EE}\left(  x\right)  \ast
E(x,t)-w_{EI}\left(  x\right)  \ast I(x,t)+h_{E}\left(  x,t\right)  \right) \\
& \\
\tau\frac{\partial I(x,t)}{\partial t}= & -I(x,t)+\\
& \\
& \left(  1-r_{I}I(x,t)\right)  S_{I}\left(  w_{IE}\left(  x\right)  \ast
E(x,t)-w_{II}(x)\ast I(x,t)+h_{I}\left(  x,t\right)  \right)  ,
\end{array}
\]
where $E(x,t)$ is a temporal coarse-grained variable describing the proportion
of excitatory neuron firing per unit of time at position $x\in\mathbb{R}$ at
the \ instant $t\in\mathbb{R}_{+}$. Similarly the variable $I(x,t)$ represents
the activity of the inhibitory population of neurons. The main parameters of
the model are the strength of the connections between each subtype of
population ($w_{EE}$, $w_{IE}$, $w_{EI}$, $w_{II}$) and the strength of input
to each subpopulation ($h_{E}\left(  x,t\right)  $, $h_{I}\left(  x,t\right)
$). This model generates a diversity of dynamical behaviors that are
representative of observed activity in the brain, like multistability,
oscillations, traveling waves, and spatial patterns, see, e.g.,
\cite{Wilson-Cowan 1}-\cite{Wilson-Cowan 2}, \cite{Neural-Fields} and the
references therein.

We formulate the Wilson-Cowan model on Abelian, locally compact topological
groups. The classical model corresponds to the group $(\mathbb{R},+)$. In this
framework, using classical techniques on semilinear evolution equations, see
e.g. \cite{C-H}-\cite{Milan}, we show that the corresponding Cauchy problem is
locally well-posed, and if $r_{E}=r_{I}=0$, it is globally well-posed, see
Theorem \ref{Theorem1}. This last condition corresponds to the case of two
coupled perceptrons.

Nowadays, there is a large amount of experimental data about the connection
matrices of the cerebral cortex of invertebrates and mammalians. Based on
these data, several researchers hypothesized that cortical neural networks are
arranged in fractal or self-similar patterns and have the small-world
property, see, e.g., \cite{Sporns et al 1}-\cite{Skoch et al}, and the
references therein. The connection matrices provide a static view of the
neural connections.

The investigation of the relationships between the Wilson-Cowan model and the
connection matrices is quite natural, since model was proposed to explain the
cortical dynamics, while the matrices describe the functional geometry of the
cortex. We initiate this study here.

A network having the small-world property necessarily has long-range
interactions; see Section \ref{Section_2}. In the Wilson-Cowan model, the
kernels ($w_{EE}$, $w_{IE}$, $w_{EI}$, $w_{II}$) describing the neural
interactions are Gaussian in nature, so only short-range interactions may
occur. For practical purposes, these kernels have compact support. On the
other hand, the Wilson-Cowan model on a general group requires that the
kernels be integrable; see Section \ref{Section_1}. We argue that $G$ must be
compact to satisfy the small-world property. Under this condition, any
continuous kernel is integrable. Wilson and Cowan formulated their model on
the group $(\mathbb{R},+)$. The only compact subgroup of this group is the
trivial one. The small-world property is, therefore, incompatible with the
classical Wilson-Cowan model.

It is worth noting that the absence of non-trivial compact subgroups in
$(\mathbb{R},+)$ is a consequence of the Archimedean axiom (the absolute value
is \ not bounded on the integers). Therefore, to avoid this problem, we can
replace $\mathbb{R}$ with a non-Archimedean field, which is a field where the
Archimedean axiom is no valid. We selected the field of the $p$-adic numbers.
This field has infinitely many compact subgroups, the balls with center at the
origin. We selected the unit ball, the ring of $p$-adic numbers $\mathbb{Z}%
_{p}$. The $p$-adic integers are organized in an infinite rooted tree. We use
this hierarchical structure as the topology for our $p$-adic version of the
Wilson-Cowan model. In principle, we could use other groups, such as the
classical compact groups, to replace $(\mathbb{R},+)$, but it is also
essential to have a rigorous study of the discretization of the model. For the
group $\mathbb{Z}_{p}$, this task can be performed using standard
approximation techniques for evolutionary equations, see, e.g., \cite[Section
5.4]{Milan}.

The $p$-adic Wilson-Cowan model admits good discretizations. Each
discretization corresponds to a system of non-linear integro-differential
equations on a finite rooted tree. We show that the solution of the Cauchy
problem of this discrete system provides a good approximation to the solution
of the Cauchy problem of the $p$-adic Wilson-Cowan model, see Theorem
\ref{Theorem2}.

We provide extensive numerical simulations of $p$-adic Wilson-Cowan models. In
Section \ref{Section_4}, we present three numerical simulations showing that
the $p$-adic models provide a similar explanation to the numerical experiments
presented in \cite{Wilson-Cowan 2}. In these experiments the kernels ($w_{EE}%
$, $w_{IE}$, $w_{EI}$, $w_{II}$) were chosen to have similar properties to the
kernels used in \cite{Wilson-Cowan 2}. In Section \ref{Section_5}, we consider
the problem of how to integrate the connection matrices into the $p$-adic
Wilson-Cowan model. This fundamental scientific task aims to use the vast
amount of data on maps of neural connections to understand the dynamics of the
cerebral cortex of invertebrates and mammalians. We show that the connection
matrix of the cat cortex can be well approximated by a $p$-adic kernel
$K_{r}(x,y)$. We then replace the excitatory-excitatory relation term
$w_{EE}\ast E$ by $\int_{\mathbb{Z}_{p}}K_{r}(x,y)E(y)dy$ but keep the other
kernels as in Simulation 1 presented in Section \ref{Section_4}. The response
of this network is entirely different from that given in Simulation 1. For the
same stimulus, the response of the last network exhibits very complex
patterns, while the response of the network presented in Simulation 1 is simpler.

The $p$-adic analysis has shown to be the right tool in the construction of a
wide variety of models of complex hierarchic systems, see, e.g.,
\cite{Aveetisov 1}-\cite{ZZ2}, and the references therein. Many of these
models involve abstract evolution equations of the type $\partial
_{t}u+\boldsymbol{A}u=F(u)$. In these models, the discretization of the
operator $\boldsymbol{A}$ is an ultrametric matrix $A_{l}=\left[
a_{ij}\right]  _{i,j\in G_{l}}$, where $G_{l}$ is a finite rooted tree with
$l$ levels and $p^{l}$ branches, here $p$ is a fixed prime number, see the
numerical simulations in \cite{ZZ1}-\cite{ZZ2}. Locally, the connection
matrices look very similar to the matrices $A_{l}$. The problem of
approximating large connection matrices by ultrametric matrices is an open problem.

\section{\label{Section_1}An abstract version of the Wilson-Cowan Equations}

In this section, we formulate the Wilson-Cowan model on locally compact
topological groups and study the well-posedness of the Cauchy problem attached
to these equations.

\subsection{Wilson-Cowan Equations on Abelian locally compact topological
groups}

Let $\left(  \mathcal{G},+\right)  $ be an Abelian, locally compact
topological group. Let $d\mu$ be a fixed Haar measure on $\left(
\mathcal{G},+\right)  $. The basic example is $\left(  \mathbb{R}%
^{N},+\right)  $, the $N$-dimensional Euclidean space\ considered as an
additive group. In this case $d\mu$ is the Lebesgue measure of $\mathbb{R}%
^{N}$.

Let $L^{\infty}\left(  \mathcal{G}\right)  $ be the $\mathbb{R}$-vector space
of functions $f:\mathcal{G}\rightarrow\mathbb{R}$ \ satisfying
\[
\left\Vert f\right\Vert _{\infty}=\sup_{x\in\mathcal{G\smallsetminus A}%
}\left\vert f\left(  x\right)  \right\vert <\infty,
\]
where $\mathcal{A}$ is a subset of $\mathcal{G}$ with measure zero. Let
$L^{1}\left(  \mathcal{G}\right)  $ be the $\mathbb{R}$-vector space of
functions $f:\mathcal{G}\rightarrow\mathbb{R}$ \ satisfying%
\[
\left\Vert f\right\Vert _{1}=%
%TCIMACRO{\dint \limits_{\mathcal{G}}}%
%BeginExpansion
{\displaystyle\int\limits_{\mathcal{G}}}
%EndExpansion
\left\vert f\left(  x\right)  \right\vert d\mu<\infty.
\]
For a fixed $w\in L^{1}\left(  \mathcal{G}\right)  $, the mapping%
\[%
\begin{array}
[c]{ccc}%
L^{\infty}\left(  \mathcal{G}\right)  & \rightarrow & L^{\infty}\left(
\mathcal{G}\right) \\
&  & \\
f\left(  x\right)  & \rightarrow & \left(  w\ast f\right)  \left(  x\right)
=\int_{\mathcal{G}}\text{ }w\left(  x-y\right)  f\left(  y\right)  d\mu(y)
\end{array}
\]
is a well-defined \ linear bounded operator satisfying%
\[
\left\Vert w\ast f\right\Vert _{\infty}\leq\left\Vert w\right\Vert
_{1}\left\Vert f\right\Vert _{\infty}.
\]

\begin{remark}
\begin{itemize}
\item[(i)] We recall that $f:\mathbb{R}\rightarrow\mathbb{R}$ is called a
Lipschitz function if there is positive constant $L(f)$ such that $\left\vert
f(x)-f(y)\right\vert \leq L(f)\left\vert x-y\right\vert $ for all $x$ and $y$.

\item[(ii)] Given $\mathcal{X}$, $\mathcal{Y}$, Banach spaces, we denote by
$\mathcal{C}(\mathcal{X},\mathcal{Y})$ the space of continuous functions from
$\mathcal{X}$ to $\mathcal{Y}$.

\item[(iii)] If $\mathcal{Y}=\mathbb{R}$, we use the simplified notation
$\mathcal{C}(\mathcal{X})$.
\end{itemize}
\end{remark}

We fix two bounded Lipschitz functions $S_{E}$, $S_{I}$ satisfying
\[
S_{E}\left(  0\right)  =S_{I}\left(  0\right)  =0.
\]
We also fix $w_{EE}$, $w_{IE}$, $w_{EI}$, $w_{II}\in L^{1}\left(
\mathcal{G}\right)  $, and $h_{E}\left(  x,t\right)  $, $h_{I}\left(
x,t\right)  \in\mathcal{C}(\left[  0,\infty\right]  ,L^{\infty}\left(
\mathcal{G}\right)  )$.

The Wilson-Cowan model on $\mathcal{G}$ is given by the following system of
non-linear integron-differential evolution equations:%
\begin{align*}
\tau\frac{\partial E(x,t)}{\partial t}  &  =-E(x,t)+\\
&  \left(  1-r_{E}E(x,t)\right)  S_{E}\left(  w_{EE}\left(  x\right)  \ast
E(x,t)-w_{EI}\left(  x\right)  \ast I(x,t)+h_{E}\left(  x,t\right)  \right)
\end{align*}%
\begin{align*}
\tau\frac{\partial I(x,t)}{\partial t}  &  =-I(x,t)+\\
&  \left(  1-r_{I}I(x,t)\right)  S_{I}\left(  w_{IE}\left(  x\right)  \ast
E(x,t)-w_{II}(x)\ast I(x,t)+h_{I}\left(  x,t\right)  \right)  ,
\end{align*}
where $\ast$ denotes the convolution in the space variables and $r_{E}$,
$r_{I}\in\mathbb{R}$.

The space $\mathcal{X}:=L^{\infty}\left(  \mathcal{G}\right)  \times
L^{\infty}\left(  \mathcal{G}\right)  $ endowed with the norm
\[
\left\Vert \left(  f_{1},f_{2}\right)  \right\Vert =\max\left\{  \left\Vert
f_{1}\right\Vert _{\infty},\left\Vert f_{2}\right\Vert _{\infty}\right\}
\]
is a real Banach space.

Given $f=\left(  f_{1},f_{2}\right)  \in\mathcal{X}$, and $P(x)$, $Q(x)\in
L^{\infty}\left(  \mathcal{G}\right)  $, we set
\begin{equation}
\boldsymbol{F}_{E}(f)=S_{E}\left(  w_{EE}(x)\ast f_{1}(x)-w_{EI}%
(x)f_{2}(x)+P(x)\right)  , \label{Equation_F_E}%
\end{equation}
and
\begin{equation}
\boldsymbol{F}_{I}(f)=S_{I}\left(  w_{IE}(x)\ast f_{1}(x)-w_{II}(x)\ast
f_{2}(x)+Q(x)\right)  . \label{Equation_F_I}%
\end{equation}
We also set%
\[%
\begin{array}
[c]{ccc}%
\mathcal{X} & \rightarrow & \mathcal{X}\\
&  & \\
f & \rightarrow & \boldsymbol{H}(f),
\end{array}
\]
where $\boldsymbol{H}(f)=\left(  \boldsymbol{H}_{E}(f),\boldsymbol{H}%
_{I}(f)\right)  $ and
\begin{equation}
\boldsymbol{H}_{E}(f)=(1-r_{E}f_{1})\boldsymbol{F}_{E}(f)\text{,
}\boldsymbol{H}_{I}(f)=(1-r_{I}f_{2})\boldsymbol{F}_{I}(f). \label{Equation_H}%
\end{equation}

\begin{remark}
We say that $\boldsymbol{H}$ \ is Lipschitz continuous (or globally Lipschitz)
if there is a constant $L(\boldsymbol{H})$ such that $\left\Vert
\boldsymbol{H}(f)-\boldsymbol{H}(g)\right\Vert \leq L(\boldsymbol{H}%
)\left\Vert f-g\right\Vert $, for all $f$, $g\in\mathcal{X}$. We also say that
$\boldsymbol{H}$ \ is locally Lipschitz continuous (or locally Lipschitz) if
for every $h\in\mathcal{X}$ there exist a ball $B_{R}\left(  h\right)
=\left\{  f\in\mathcal{X};\left\Vert f-h\right\Vert <R\right\}  $ such that
$\left\Vert \boldsymbol{H}(f)-\boldsymbol{H}(g)\right\Vert \leq L(R,
h)\left\Vert f-g\right\Vert $ for all $f$, $g\in B_{R}\left(  h\right)  $.
Since $\mathcal{X}$ is a vector space, without loss of generality, we can
assume that $h=0$.
\end{remark}

\begin{lemma}
\label{Lemma1} With the above notation. If $r_{I}\neq0$ or $r_{E}\neq0$, the
$\boldsymbol{H}:\mathcal{X}\rightarrow\mathcal{X}$ is a well-defined locally
Lipschitz mapping. If $r_{I}=$ $r_{E}=0$, then $\boldsymbol{H}:\mathcal{X}%
\rightarrow\mathcal{X}$ is a well-defined globally Lipschitz mapping.
\end{lemma}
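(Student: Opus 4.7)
The plan is to verify three things: that $\boldsymbol{H}$ maps $\mathcal{X}$ into itself; that it is locally Lipschitz in the general case; and that the Lipschitz constant becomes independent of the ball radius when $r_E=r_I=0$. All three rely on two ingredients already in the excerpt: the bound $\lVert w\ast f\rVert_\infty\le \lVert w\rVert_1\lVert f\rVert_\infty$ for convolution, and the assumption that $S_E,S_I$ are bounded Lipschitz with $S_E(0)=S_I(0)=0$, which gives both $\lVert S_E\rVert_\infty<\infty$ and the linear bound $|S_E(u)|\le L(S_E)|u|$ (and similarly for $S_I$).

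Well-definedness comes first. Given $f=(f_1,f_2)\in\mathcal{X}$, the argument of $S_E$ in \eqref{Equation_F_E} is an element of $L^\infty(\mathcal{G})$ by the convolution bound applied to each term and the assumption $P\in L^\infty$. Composing with the bounded function $S_E$ keeps us in $L^\infty$, so $\boldsymbol{F}_E(f)\in L^\infty(\mathcal{G})$; multiplying by $1-r_Ef_1\in L^\infty$ shows $\boldsymbol{H}_E(f)\in L^\infty$. The same argument works for $\boldsymbol{H}_I$.

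For the Lipschitz estimate I would first handle $\boldsymbol{F}_E,\boldsymbol{F}_I$. Using the Lipschitz bound on $S_E$ and the convolution inequality,
\begin{equation*}
\lVert \boldsymbol{F}_E(f)-\boldsymbol{F}_E(g)\rVert_\infty
\le L(S_E)\bigl(\lVert w_{EE}\rVert_1+\lVert w_{EI}\rVert_1\bigr)\lVert f-g\rVert,
\end{equation*}
and an analogous bound for $\boldsymbol{F}_I$. Then I would split the product
\begin{equation*}
\boldsymbol{H}_E(f)-\boldsymbol{H}_E(g)=(1-r_Ef_1)\bigl(\boldsymbol{F}_E(f)-\boldsymbol{F}_E(g)\bigr)-r_E(f_1-g_1)\boldsymbol{F}_E(g),
\end{equation*}
estimate the two pieces in $L^\infty$, and use $\lVert \boldsymbol{F}_E(g)\rVert_\infty\le \lVert S_E\rVert_\infty$. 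On a ball $B_R(0)$ we get $\lVert 1-r_Ef_1\rVert_\infty\le 1+|r_E|R$, so
\begin{equation*}
\lVert \boldsymbol{H}_E(f)-\boldsymbol{H}_E(g)\rVert_\infty
\le \bigl[(1+|r_E|R)L(S_E)(\lVert w_{EE}\rVert_1+\lVert w_{EI}\rVert_1)+|r_E|\lVert S_E\rVert_\infty\bigr]\lVert f-g\rVert.
\end{equation*}
A parallel estimate handles $\boldsymbol{H}_I$, and taking the max yields the local Lipschitz constant $L(R,0)$.

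The main obstacle, which is really the point of the dichotomy in the statement, is precisely the factor $1-r_Ef_1$: it is only bounded uniformly if we restrict to a ball, which is why one cannot avoid the $R$-dependence in general. When $r_E=r_I=0$ this factor is identically $1$ and the cross term $r_E(f_1-g_1)\boldsymbol{F}_E(g)$ vanishes, so the above estimate reduces to $\lVert \boldsymbol{H}_E(f)-\boldsymbol{H}_E(g)\rVert_\infty\le L(S_E)(\lVert w_{EE}\rVert_1+\lVert w_{EI}\rVert_1)\lVert f-g\rVert$ with an analogous inequality for $\boldsymbol{H}_I$, giving a Lipschitz constant independent of $f,g$ and hence global Lipschitz continuity. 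This completes the proof.
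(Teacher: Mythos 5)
Your proof is correct and follows essentially the same route as the paper's: the same product decomposition $\boldsymbol{H}_E(f)-\boldsymbol{H}_E(g)=(1-r_Ef_1)(\boldsymbol{F}_E(f)-\boldsymbol{F}_E(g))-r_E(f_1-g_1)\boldsymbol{F}_E(g)$, the same use of $\lVert w\ast f\rVert_\infty\le\lVert w\rVert_1\lVert f\rVert_\infty$ and of the Lipschitz bound on $S_E,S_I$, and the same observation that the factor $1-r_Ef_1$ is only uniformly bounded on a ball unless $r_E=r_I=0$. (A minor point in your favor: your constant $L(S_E)(\lVert w_{EE}\rVert_1+\lVert w_{EI}\rVert_1)$ is the correct bound for the two-term sum, whereas the paper writes $\max\{\lVert w_{EE}\rVert_1,\lVert w_{EI}\rVert_1\}$, which strictly needs the sum or a factor of $2$.)
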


\begin{proof}
We first notice that for $f$, $g\in\mathcal{X}$, by using that $S_{E}$ is
Lipschitz,%
\begin{align*}
&  \left\vert \left(  \boldsymbol{F}_{E}(f)-\boldsymbol{F}_{E}(g)\right)
(x)\right\vert \\
&  \leq L\left(  S_{E}\right)  \left\vert w_{EE}(x)\ast\left(  f_{1}%
(x)-g_{1}\left(  x\right)  \right)  -w_{EI}(x)\left(  f_{2}(x)-g_{2}%
(x)\right)  \right\vert \\
&  \leq L\left(  S_{E}\right)  \left\{  \left\Vert w_{EE}\right\Vert
_{1}\left\Vert f_{1}-g_{1}\right\Vert _{\infty}+\left\Vert w_{EI}\right\Vert
_{1}\left\Vert f_{2}-g_{2}\right\Vert _{\infty}\right\} \\
&  \leq L\left(  S_{E}\right)  \max\left\{  \left\Vert w_{EE}\right\Vert
_{1},\left\Vert w_{EI}\right\Vert _{1}\right\}  \left\Vert f-g\right\Vert ,
\end{align*}
which implies that%
\begin{equation}
\left\Vert \boldsymbol{F}_{E}(f)-\boldsymbol{F}_{E}(g)\right\Vert \leq
L(\boldsymbol{F}_{E})\left\Vert f-g\right\Vert . \label{Estimation 1}%
\end{equation}
Similarly%
\begin{equation}
\left\Vert \boldsymbol{F}_{I}(f)-\boldsymbol{F}_{I}(g)\right\Vert \leq
L(\boldsymbol{F}_{I})\left\Vert f-g\right\Vert , \label{Estimation 2}%
\end{equation}
where $L(\boldsymbol{F}_{I})=L(S_{I})\max\left\{  \left\Vert w_{IE}\right\Vert
_{1},\left\Vert w_{II}\right\Vert _{1}\right\}  $.

Now, by using estimation (\ref{Estimation 1}), \ and the fact that $\left\Vert
\boldsymbol{F}_{E}(f)\right\Vert \leq\left\Vert S_{E}\right\Vert _{\infty}$,
\begin{gather*}
\left\Vert \boldsymbol{H}_{E}(f)-\boldsymbol{H}_{E}(g)\right\Vert =\left\Vert
(1-r_{E}f_{1})\boldsymbol{F}_{E}(f)-(1-r_{E}g_{1})\boldsymbol{F}%
_{E}(g)\right\Vert =\\
\left\Vert \left(  1-r_{E}f_{1}\right)  \left(  \boldsymbol{F}_{E}%
(f)-\boldsymbol{F}_{E}(g)\right)  -r_{E}\boldsymbol{F}_{E}(g)\left(
f_{1}-g_{1}\right)  \right\Vert \\
\leq\left(  1+\left\vert r_{E}\right\vert \left\Vert f_{1}\right\Vert
_{\infty}\right)  \left\Vert \boldsymbol{F}_{E}(f)-\boldsymbol{F}%
_{E}(g)\right\Vert +\left\vert r_{E}\right\vert \left\Vert \boldsymbol{F}%
_{E}(f)\right\Vert \left\Vert f_{1}-g_{1}\right\Vert _{\infty}\\
\leq\left\{  \left(  1+\left\vert r_{E}\right\vert \left\Vert f_{1}\right\Vert
_{\infty}\right)  L(\boldsymbol{F}_{E})+\left\vert r_{E}\right\vert \left\Vert
S_{E}\right\Vert _{\infty}\right\}  \left\Vert f-g\right\Vert .
\end{gather*}
By a similar reasoning using estimation (\ref{Estimation 2}), one gets that%
\[
\left\Vert \boldsymbol{H}_{I}(f)-\boldsymbol{H}_{I}(g)\right\Vert _{\infty
}\leq\left(  \left(  1+\left\vert r_{I}\right\vert \left\Vert f_{2}\right\Vert
_{\infty}\right)  L(\boldsymbol{F}_{I})+\left\vert r_{I}\right\vert \left\Vert
S_{I}\right\Vert _{\infty}\right)  \left\Vert f-g\right\Vert ,
\]
and consequently%
\begin{align}
\left\Vert \boldsymbol{H}(f)-\boldsymbol{H}(g)\right\Vert  &  =\max\left\{
\left\Vert \boldsymbol{H}_{E}(f)-\boldsymbol{H}_{E}(g)\right\Vert _{\infty
},\left\Vert \boldsymbol{H}_{I}(f)-\boldsymbol{H}_{I}(g)\right\Vert _{\infty
}\right\} \nonumber\\
&  \leq\left(  A\left(  1+B\left\Vert f_{}\right\Vert _{\infty}\right)
+C\right)  \left\Vert f-g\right\Vert , \label{Estimation 3}%
\end{align}
where
\[
A:=\max\left\{  L(\boldsymbol{F}_{E}),L(\boldsymbol{F}_{I})\right\}
,B:=\left\{  \left\vert r_{E}\right\vert ,\left\vert r_{I}\right\vert
\right\}  ,C:=\max\left\{  \left\vert r_{E}\right\vert \left\Vert
S_{E}\right\Vert _{\infty},\left\vert r_{I}\right\vert \left\Vert
S_{I}\right\Vert _{\infty}\right\}  .
\]

In the case $r_{E}=r_{I}=0$, estimation (\ref{Estimation 3}) takes the form
\begin{equation}
\left\Vert \boldsymbol{H}(f)-\boldsymbol{H}(g)\right\Vert \leq A\left\Vert
f-g\right\Vert . \label{Estimation 4}%
\end{equation}
Which in turn implies that for $f\in\mathcal{X},$%
\begin{equation}
\left\Vert \boldsymbol{H}(f)\right\Vert \leq\left\Vert \boldsymbol{H}%
(f)-\boldsymbol{H}(0)\right\Vert +\left\Vert \boldsymbol{H}(0)\right\Vert \leq
A\left\Vert f\right\Vert +\left\Vert \left(  \boldsymbol{F}_{E}\left(
0\right)  ,\boldsymbol{F}_{I}\left(  0\right)  \right)  \right\Vert
\label{Estimation 5}%
\end{equation}%
\[
=A\left\Vert f\right\Vert +\left\Vert \left(  S_{E}\left(  0\right)
,S_{I}\left(  0\right)  \right)  \right\Vert \leq A\left\Vert f\right\Vert
+\max\left\{  \left\Vert S_{E}\right\Vert _{\infty},\left\Vert S_{I}%
\right\Vert _{\infty}\right\}  <\infty.
\]
Then estimations (\ref{Estimation 5})- (\ref{Estimation 4})\ imply that
$\boldsymbol{H}$ is a well-defined globally Lipschitz mapping.

We now consider the case $r_{I}\neq0$ or $r_{E}\neq0$. Take $f$, $g\in
B_{R}\left(  0\right)  $, for some $R>0$. Then $\left\Vert f_{1}\right\Vert
_{\infty}<R$, and estimation (\ref{Estimation 3}) takes the form%
\begin{align*}
\left\Vert \boldsymbol{H}(f)-\boldsymbol{H}(g)\right\Vert  &  \leq\left\{
\left(  1+\left\vert r_{E}\right\vert R\right)  L(\boldsymbol{F}%
_{E})+\left\vert r_{E}\right\vert \left\Vert S_{E}\right\Vert _{\infty
}\right\}  \left\Vert f-g\right\Vert \\
&  \leq C\left\Vert f-g\right\Vert \text{, for }f,g\in B_{R}\left(  0\right)
\text{.}%
\end{align*}
Which implies that%
\begin{equation}
\left\Vert \boldsymbol{H}(f)\right\Vert \leq\left\Vert \boldsymbol{H}%
(f)-\boldsymbol{H}(0)\right\Vert +\left\Vert \boldsymbol{H}(0)\right\Vert \leq
C\left\Vert f\right\Vert +\max\left\{  \left\Vert S_{E}\right\Vert _{\infty
},\left\Vert S_{I}\right\Vert _{\infty}\right\}  <\infty. \label{Estimation 6}%
\end{equation}
Then, the restriction of $\boldsymbol{H}$\ to $B_{R}\left(  0\right)  \times
B_{R}\left(  0\right)  $ is a well-defined Lipschitz mapping.
\end{proof}

The estimations given in Lemma \ref{Lemma1} are still valid for functions
depending continuously on a parameter $t$.\ More precisely, take $T>0$ and
$f_{i}\in\mathcal{C}\left(  \left[  0,T\right]  ,\mathcal{U}\right)  $, for
$i=1,2$, where $\mathcal{U}\subset$ $L^{\infty}\left(  \mathcal{G}\right)  $
is an open subset. We assume that
\[
\left(  0,T\right)  \subset f_{i}^{-1}\left(  \mathcal{U}\right)  \text{, for
}i=1,2.
\]
We use the notation $f_{i}=f_{i}\left(  \cdot,t\right)  $, where $t\in\left[
0,T\right]  $, \ $i=1,2$. We replace $P(x)$ by $h_{E}\left(  x,t\right)  $ and
$Q(x)$ by $h_{I}\left(  x,t\right)  $, with $h_{E}\left(  x,t\right)  $,
$h_{I}\left(  x,t\right)  \in\mathcal{C}\left(  \left[  0,\infty\right)
,L^{\infty}\left(  \mathcal{G}\right)  \right)  $. We denote the corresponding
maping $\boldsymbol{H}\left(  f\right)  $ as $\boldsymbol{H}\left(
f,s\right)  $. We also set $\mathcal{X}_{\mathcal{U},T}:=\left[  0,T\right]
\times\mathcal{U}$.

\begin{lemma}
\label{Lemma2}With the above notation, the following assertions hold:

\noindent(i) The mapping $\boldsymbol{H}:\mathcal{X}_{\mathcal{U},T}%
\times\mathcal{X}_{\mathcal{U},T}\rightarrow\mathcal{X}$ is continuous, and
for each $t\in\left(  0,T\right)  $ and each $h\in\mathcal{U}$ there exist
$R>0$ and $L<\infty$ such that%
\[
\left\Vert \boldsymbol{H}\left(  f,s\right)  -\boldsymbol{H}\left(
g,s\right)  \right\Vert \leq L\left\Vert f-g\right\Vert \text{ for }f,g\in
B_{R}(h)\text{, }s\in\left[  0,t\right]  .
\]

\noindent(ii) For $t\in\left(  0,T\right)  $, $f\in\mathcal{U}\times
\mathcal{U}$,
\[%
%TCIMACRO{\dint \nolimits_{0}^{t}}%
%BeginExpansion
{\displaystyle\int\nolimits_{0}^{t}}
%EndExpansion
\left\Vert \boldsymbol{H}\left(  f,s\right)  \right\Vert ds<\infty.
\]

\end{lemma}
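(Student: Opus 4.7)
The plan is to reduce both assertions to the estimates already derived in Lemma \ref{Lemma1}, noting that the $s$-dependence of $\boldsymbol{H}$ enters only through the external inputs $h_{E}(\cdot,s)$, $h_{I}(\cdot,s)$, which appear strictly inside the Lipschitz functions $S_{E}$, $S_{I}$. In difference estimates these inputs cancel, so the Lipschitz constants obtained in Lemma \ref{Lemma1} carry over \emph{verbatim}, uniformly in $s\in\left[0,T\right]$; the $s$-dependence instead contributes only to continuity, which will follow from $h_{E},h_{I}\in\mathcal{C}\left(\left[0,\infty\right),L^{\infty}\left(\mathcal{G}\right)\right)$.

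For the Lipschitz bound in (i), I would replace $P\left(x\right)$ by $h_{E}\left(x,s\right)$ and $Q\left(x\right)$ by $h_{I}\left(x,s\right)$ in (\ref{Equation_F_E})--(\ref{Equation_F_I}) and repeat the computation of Lemma \ref{Lemma1}. Since the input terms drop out of $\boldsymbol{F}_{E}\left(f,s\right)-\boldsymbol{F}_{E}\left(g,s\right)$, the estimates (\ref{Estimation 1}), (\ref{Estimation 2}), and hence (\ref{Estimation 3}), hold with the same constants, uniformly in $s\in\left[0,T\right]$. Restricting to a ball $B_{R}\left(h\right)$ (as in Lemma \ref{Lemma1}, one may assume $h=0$ without loss of generality) then yields the required local Lipschitz bound with some constant $L=L(R,h)$.

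For joint continuity, I would use the triangle inequality
\[
\left\Vert \boldsymbol{H}\left(f,s\right)-\boldsymbol{H}\left(f_{0},s_{0}\right)\right\Vert \leq\left\Vert \boldsymbol{H}\left(f,s\right)-\boldsymbol{H}\left(f_{0},s\right)\right\Vert +\left\Vert \boldsymbol{H}\left(f_{0},s\right)-\boldsymbol{H}\left(f_{0},s_{0}\right)\right\Vert .
\]
The first term is bounded by $L\left\Vert f-f_{0}\right\Vert $ by the previous step. For the second, the Lipschitz property of $S_{E}$ yields
\[
\left\Vert \boldsymbol{F}_{E}\left(f_{0},s\right)-\boldsymbol{F}_{E}\left(f_{0},s_{0}\right)\right\Vert _{\infty}\leq L\left(S_{E}\right)\left\Vert h_{E}\left(\cdot,s\right)-h_{E}\left(\cdot,s_{0}\right)\right\Vert _{\infty},
\]
which tends to zero as $s\to s_{0}$ by the continuity hypothesis on $h_{E}$; multiplying by the bounded factor $\left(1-r_{E}f_{0,1}\right)$ preserves this, and $\boldsymbol{F}_{I}$, $\boldsymbol{H}_{I}$ are handled identically.

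For (ii), since $S_{E}$, $S_{I}$ are bounded, one has $\left\Vert \boldsymbol{F}_{E}\left(f,s\right)\right\Vert \leq\left\Vert S_{E}\right\Vert _{\infty}$ and $\left\Vert \boldsymbol{F}_{I}\left(f,s\right)\right\Vert \leq\left\Vert S_{I}\right\Vert _{\infty}$ uniformly in $s$; for fixed $f$ this produces a bound on $\left\Vert \boldsymbol{H}\left(f,s\right)\right\Vert $ independent of $s\in\left[0,t\right]$, and finiteness of the integral is immediate. The only point that requires attention throughout is distinguishing pointwise continuity of $h_{E},h_{I}$ in $s$ from $L^{\infty}$-level continuity, but the latter is built into the regularity class $\mathcal{C}\left(\left[0,\infty\right),L^{\infty}\left(\mathcal{G}\right)\right)$ chosen for the inputs, so no genuinely new estimate is needed.
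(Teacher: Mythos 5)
Your proposal is correct and follows essentially the same route as the paper: part (i) is reduced to the Lipschitz estimates of Lemma \ref{Lemma1} (the inputs $h_{E},h_{I}$ cancel in differences, so the constants are uniform in $s$), and part (ii) follows from the uniform bound $\left\Vert \boldsymbol{H}(f,s)\right\Vert \leq C$ integrated over $\left[0,t\right]$. You in fact supply more detail than the paper does, in particular the explicit triangle-inequality argument for joint continuity in $(f,s)$, which the paper leaves implicit.
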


\begin{proof}
(i) it follows from Lemma \ref{Lemma1}. By estimations (\ref{Estimation 5})
and (\ref{Estimation 6}), $\left\Vert \boldsymbol{H}\left(  f,s\right)
\right\Vert $ is bounded by a positive constant $C$ depending on $R$, then%
\[%
%TCIMACRO{\dint \nolimits_{0}^{t}}%
%BeginExpansion
{\displaystyle\int\nolimits_{0}^{t}}
%EndExpansion
\left\Vert \boldsymbol{H}\left(  f,s\right)  \right\Vert ds<CT.
\]

\end{proof}

\subsection{The Cauchy problem}

With the above notation, the Cauchy problem for the abstract Wilson-Cowan
system takes the following form%
\begin{equation}
\left\{
\begin{array}
[c]{ll}%
\tau\frac{\partial}{\partial t}\left[
\begin{array}
[c]{c}%
E\left(  x,t\right) \\
I\left(  x,t\right)
\end{array}
\right]  +\left[
\begin{array}
[c]{c}%
E\left(  x,t\right) \\
I\left(  x,t\right)
\end{array}
\right]  =\boldsymbol{H}(\left[
\begin{array}
[c]{c}%
E\left(  x,t\right) \\
I\left(  x,t\right)
\end{array}
\right]  ), & x\in\mathcal{G}\text{, }t\geq0\\
& \\
\left[
\begin{array}
[c]{c}%
E\left(  x,0\right) \\
I\left(  x,0\right)
\end{array}
\right]  =\left[
\begin{array}
[c]{c}%
E_{0}\left(  x\right) \\
I_{0}\left(  x\right)
\end{array}
\right]  \in\mathcal{X}. &
\end{array}
\right.  \label{Cauchy}%
\end{equation}

\begin{theorem}
\label{Theorem1}(i) There exist $T_{0}\in\left(  0,T\right]  $ depending on
$\left[
\begin{array}
[c]{c}%
E_{0}\left(  x\right) \\
I_{0}\left(  x\right)
\end{array}
\right]  \in\mathcal{X}$, such that Cauchy problem (\ref{Cauchy}) has a unique
solution $\left[
\begin{array}
[c]{c}%
E\left(  x,t\right) \\
I\left(  x,t\right)
\end{array}
\right]  $ in $\mathcal{C}^{1}(\left[  0,T_{0}\right)  ,\mathcal{X})$.

\noindent(ii) The solution satisfies%
\begin{gather}
E\left(  x,t\right)  =e^{\frac{-t}{\tau}}E_{0}\left(  x\right)  +%
%TCIMACRO{\dint \nolimits_{0}^{t}}%
%BeginExpansion
{\displaystyle\int\nolimits_{0}^{t}}
%EndExpansion
e^{\frac{-(t-s)}{\tau}}(1-r_{E}E\left(  x,s\right)  )\times\label{Equation_E}%
\\
\left\{  S_{E}\left(  w_{EE}(x)\ast E\left(  x,s\right)  -w_{EI}(x)\ast
I\left(  x,s\right)  +h_{E}\left(  x,s\right)  \right)  \right\}  ds,\nonumber
\end{gather}%
\begin{gather}
I\left(  x,t\right)  =e^{\frac{-t}{\tau}}I_{0}\left(  x\right)  +%
%TCIMACRO{\dint \nolimits_{0}^{t}}%
%BeginExpansion
{\displaystyle\int\nolimits_{0}^{t}}
%EndExpansion
e^{\frac{-(t-s)}{\tau}}(1-r_{E}I\left(  x,s\right)  )\times\label{Equation_I}%
\\
\left\{  S_{I}\left(  w_{IE}(x)\ast E\left(  x,s\right)  -w_{II}(x)\ast
I\left(  x,s\right)  +h_{E}\left(  x,s\right)  \right)  \right\}  ds,\nonumber
\end{gather}
for $t\in\left[  0,T_{0}\right)  $ and $x\in\mathcal{G}$.

\noindent(iii) If $r_{I}=$ $r_{E}=0$, then $T_{0}=\infty$ for any$\left[
\begin{array}
[c]{c}%
E_{0}\left(  x\right) \\
I_{0}\left(  x\right)
\end{array}
\right]  \in\mathcal{X}$, and%
\begin{equation}
\left\vert E\left(  x,t\right)  \right\vert \leq\left\Vert E_{0}\right\Vert
_{\infty}+\tau\left\Vert S_{E}\right\Vert _{\infty}\text{ \ and \ }\left\vert
I\left(  x,t\right)  \right\vert \leq\left\Vert I_{0}\right\Vert _{\infty
}+\tau\left\Vert S_{I}\right\Vert _{\infty}. \label{Eq_Key_bound}%
\end{equation}

\noindent(iv) The solution $\left[
\begin{array}
[c]{c}%
E\left(  x,t\right) \\
I\left(  x,t\right)
\end{array}
\right]  $ in $\mathcal{C}^{1}(\left[  0,T_{0}\right)  ,\mathcal{X})$ depends
continuously on the initial value.
\end{theorem}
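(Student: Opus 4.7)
The plan is to reformulate the Cauchy problem (\ref{Cauchy}) as a fixed-point equation via the integrating factor $e^{t/\tau}$ and then apply Banach's contraction principle locally, extending globally when $\boldsymbol{H}$ is globally Lipschitz. Writing $u=(E,I)$ and $u_{0}=(E_{0},I_{0})$, the system is formally equivalent to the mild formulation
\[
u(x,t)=e^{-t/\tau}u_{0}(x)+\frac{1}{\tau}\int_{0}^{t}e^{-(t-s)/\tau}\,\boldsymbol{H}\bigl(u(\cdot,s),s\bigr)\,ds,
\]
which, once solved, produces the componentwise identities (\ref{Equation_E})--(\ref{Equation_I}) after unfolding the definition (\ref{Equation_H}) of $\boldsymbol{H}$.

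For (i), I would fix $T>0$, pick $R>0$, and invoke Lemma \ref{Lemma2} to obtain a Lipschitz constant $L$ for $\boldsymbol{H}(\cdot,s)$ on $B_{R}(u_{0})$ uniformly for $s\in[0,T]$, together with an integrable bound on $\|\boldsymbol{H}(u_{0},\cdot)\|$. Define the Picard map
\[
(\Phi u)(t):=e^{-t/\tau}u_{0}+\frac{1}{\tau}\int_{0}^{t}e^{-(t-s)/\tau}\,\boldsymbol{H}(u(s),s)\,ds
\]
on $\mathcal{C}\bigl([0,T_{0}],\overline{B_{R}(u_{0})}\bigr)$ equipped with the sup norm. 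Using $\int_{0}^{T_{0}}e^{-(T_{0}-s)/\tau}\,ds\leq T_{0}$, a routine estimate shows that for $T_{0}$ small enough (depending on $R$, $L$, and $u_{0}$), the operator $\Phi$ both stabilizes the closed ball and is a $\tfrac{1}{2}$-contraction. Its unique fixed point is a continuous mild solution, and standard bootstrapping from the continuity of $s\mapsto \boldsymbol{H}(u(s),s)$ upgrades it to $\mathcal{C}^{1}\bigl([0,T_{0}),\mathcal{X}\bigr)$, which proves (i) and (ii) simultaneously.

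For (iii), the \emph{global} Lipschitz property of $\boldsymbol{H}$ when $r_{E}=r_{I}=0$, established in Lemma \ref{Lemma1}, together with Gronwall's inequality applied to the mild formula, produces an a priori bound on $\|u(t)\|$ that prevents blow-up in finite time; combined with the standard continuation alternative, this forces $T_{0}=\infty$. The explicit estimates (\ref{Eq_Key_bound}) then follow by estimating (\ref{Equation_E}) pointwise in $x$: in this regime the factor $(1-r_{E}E)$ collapses to $1$, so $|S_{E}(\cdots)|\leq\|S_{E}\|_{\infty}$ and $\int_{0}^{t}e^{-(t-s)/\tau}\,ds\leq\tau$ yield the bound for $E$; the argument for $I$ is identical. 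For (iv), I would subtract the mild equations for two solutions $u$, $\widetilde{u}$ starting from $u_{0}$, $\widetilde{u}_{0}$, use the local Lipschitz bound from Lemma \ref{Lemma1} inside the integral, and close the estimate with Gronwall to obtain $\|u(t)-\widetilde{u}(t)\|\leq e^{Lt/\tau}\|u_{0}-\widetilde{u}_{0}\|$ on the common interval of existence.

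The only genuine subtlety is bookkeeping: Lemma \ref{Lemma1} is phrased for $\boldsymbol{H}$ with fixed driving terms $P,Q$, whereas the Cauchy problem involves time-dependent inputs $h_{E}(x,t)$, $h_{I}(x,t)$. The uniform-in-$s$ Lipschitz continuity furnished by Lemma \ref{Lemma2}(i), combined with the integrability in Lemma \ref{Lemma2}(ii), is precisely what is needed so that the same constants work throughout the interval $[0,T_{0}]$ and the contraction argument runs uniformly in $s$.
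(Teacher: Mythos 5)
Your proposal is correct and is essentially the paper's own argument: the authors simply delegate the Picard/contraction-mapping machinery on the mild (Duhamel) formulation to the cited results of Miklav\v{c}i\v{c} and Cazenave--Haraux (local existence and uniqueness from the local Lipschitz property of $\boldsymbol{H}$ in Lemma \ref{Lemma2}, the blow-up alternative for the continuation, and continuous dependence), and they obtain $T_{0}=\infty$ in the case $r_{E}=r_{I}=0$ from the boundedness of $S_{E}$, $S_{I}$ exactly as you do. The only discrepancy is normalization: you (correctly) keep the factor $\frac{1}{\tau}$ in front of the Duhamel integral, whereas the paper's displayed formulas (\ref{Equation_E})--(\ref{Equation_I}) omit it, so with your formulation the a priori bound comes out as $\left\Vert E_{0}\right\Vert _{\infty}+\left\Vert S_{E}\right\Vert _{\infty}$ rather than $\left\Vert E_{0}\right\Vert _{\infty}+\tau\left\Vert S_{E}\right\Vert _{\infty}$ as in (\ref{Eq_Key_bound}); this traces to a typo in the paper and does not affect the conclusion that $T_{0}=\infty$.
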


\begin{proof}
By Lemma \ref{Lemma2}-(i) and \cite[Lemma 5.2.1 and Theorem 5.1.2]{Milan}, for
each
\[
\left[
\begin{array}
[c]{c}%
E_{0}\left(  x\right) \\
I_{0}\left(  x\right)
\end{array}
\right]  \in\mathcal{X},
\]
there exists a unique $\left[
\begin{array}
[c]{c}%
E\left(  x,t\right) \\
I\left(  x,t\right)
\end{array}
\right]  \in\mathcal{C}(\left[  0,T_{0}\right]  ,\mathcal{X})$ which satisfies
(\ref{Equation_E})-(\ref{Equation_I}). By \ref{Lemma2}-(i) and \cite[Corollary
4.7.5]{Milan}, $\left[
\begin{array}
[c]{c}%
E\left(  x,t\right) \\
I\left(  x,t\right)
\end{array}
\right]  \in\mathcal{C}^{1}(\left[  0,T_{0}\right)  ,\mathcal{X})$ and
satisfies (\ref{Cauchy}). By \cite[Theorem 4.3.4]{C-H}, see also \cite[Theorem
5.2.6]{Milan}, $T_{0}=\infty$ or $T_{0}<\infty$ and $\lim_{t\rightarrow T_{0}%
}\left\Vert \left(  E\left(  t\right)  ,I\left(  t\right)  \right)
\right\Vert =\infty$. In the case $r_{I}=$ $r_{E}=0$, by using that%
\begin{multline*}%
%TCIMACRO{\dint \nolimits_{0}^{t}}%
%BeginExpansion
{\displaystyle\int\nolimits_{0}^{t}}
%EndExpansion
e^{\frac{-(t-s)}{\tau}}\left\vert S_{E}\left(  w_{EE}(x)\ast E\left(
x,s\right)  -w_{EI}(x)\ast I\left(  x,s\right)  +h_{E}\left(  x,s\right)
\right)  \right\vert ds\\
\leq\left\Vert S_{E}\right\Vert _{\infty}%
%TCIMACRO{\dint \nolimits_{0}^{t}}%
%BeginExpansion
{\displaystyle\int\nolimits_{0}^{t}}
%EndExpansion
e^{\frac{-(t-s)}{\tau}}ds<\tau\left\Vert S_{E}\right\Vert _{\infty},
\end{multline*}
and
\begin{multline*}%
%TCIMACRO{\dint \nolimits_{0}^{t}}%
%BeginExpansion
{\displaystyle\int\nolimits_{0}^{t}}
%EndExpansion
e^{\frac{-(t-s)}{\tau}}\left\vert S_{I}\left(  w_{IE}(x)\ast E\left(
x,s\right)  -w_{II}(x)\ast I\left(  x,s\right)  +h_{I}\left(  x,s\right)
\right)  \right\vert ds\\
<\tau\left\Vert S_{I}\right\Vert _{\infty},
\end{multline*}
one shows (\ref{Eq_Key_bound}), which implies that $T_{0}=\infty$.

(iv) It follows from \cite[Lemma 5.2.1 and Theorem 5.2.4]{Milan}.
\end{proof}

\section{Small-world property and Wilson-Cowan models}

\label{Section_2}

After formulating the Wilson-Cowan model on locally compact Abelian groups,
our next step is to find the groups for which the model is compatible with the
description of the cortical networks given by connection matrices. From now
on, we take $r_{I}=$ $r_{E}=0$; in this case, the Wilson-Cowan equations
describe two coupled perceptrons.

\subsection{Compactness and the small-world networks}

The original Wilson-Cowan mo\-del is formulated on $(\mathbb{R},+)$. The
kernels $w_{AB}$, $A$, $B\in\left\{  E,I\right\}  $, which control the
connections between the neurons are supposed to be radial functions of the
form%
\begin{equation}
e^{-C_{AB}\left\vert x-y\right\vert }\text{, or }e^{-D_{AB}\left\vert
x-y\right\vert ^{2}}\text{,} \label{Kernels}%
\end{equation}
where $C_{AB}$, $D_{AB}$\ are positive constants. Since $\mathbb{R}$ is
unbounded, hypothesis (\ref{Kernels}) implies that only short-range
interactions between the neurons occur. The strength of the connections
produced by kernels\ of type (\ref{Kernels}) is negligible outside of a
compact set, then for practical purposes, the interaction between groups of
neurons occur only at small distances.

Nowadays is widely accepted that the brain is a small-world network, see,
e.g., \cite{Sporns}-\cite{Muldoon et al}, and the references therein. The
small-worldness is believed to be a crucial aspect of efficient brain
organization that confers significant advantages in signal processing,
furthermore, the small-world organization is deemed essential for healthy
brain function, see, e.g., \cite{Hilgetag et al},\ and the references therein.
\ A small-world network has a topology that produces short paths across the
whole network, i.e., given two nodes, there is a short path between them (the
six degrees of separation phenomenon). In turn, this implies the existence of
long-range interactions in the network. The compatibility of the
Wilson-Cowan\ model with the small-world network property requires a
non-negligible interaction between any two groups of neurons, i.e.,
$w_{AB}\left(  x\right)  >\varepsilon>0$, for any $x\in\mathcal{G}$, and
for$\ A$, $B\in\left\{  E,I\right\}  $, where the constant $\varepsilon>0$ is
independent of $x$. By Theorem \ref{Theorem1}, it is reasonable to expect that
$w_{AB}$, $A$, $B\in\left\{  E,I\right\}  $ be integrable, then necessarily
$\mathcal{G}$ must be compact.

Finally, we mention that $\left(  \mathbb{R}^{N},+\right)  $ does not have
non-trivial compact subgroups. Indeed, if $x_{0}\neq0$, then $\left\langle
x_{0}\right\rangle =\left\{  nx_{0};n\in\mathbb{Z}\right\}  $ is a non-compact
subgroup of $\left(  \mathbb{R}^{N},+\right)  $, because $\left\{  \left\vert
n\right\vert ;n\in\mathbb{Z}\right\}  $ is not bounded. This last assertion is
equivalent to the Archimedean axiom of the real numbers. In conclusion, the
compatibility between the Wilson-Cowan model and the small-world property
requires changing $\left(  \mathbb{R},+\right)  $ to a compact Abelian group.
The simplest solution is to replace $\left(  \mathbb{R},\left\vert
\cdot\right\vert \right)  $ by a non-Archimedean field $\left(  \mathbb{F}%
,\left\vert \cdot\right\vert _{\mathbb{F}}\right)  $, where the norm satisfies%
\[
\left\vert x+y\right\vert _{\mathbb{F}}\leq\max\left\{  \left\vert
x\right\vert _{\mathbb{F}},\left\vert y\right\vert _{\mathbb{F}}\right\}  .
\]

\subsection{Neurons geometry and discreteness}

Nowadays, there are extensive databases of neuronal wiring diagrams
(connection matrices) of the invertebrates and mammalian cerebral cortex. The
connection matrices are adjacency matrices of weighted directed graphs, where
the vertices represent neurons, regions in a cortex, o neuron populations.
These matrices correspond to the kernels $w_{AB}$, $A$, $B\in\left\{
E,I\right\}  $, then, it seems natural to consider using discrete Wilson-Cowan
models, \cite{Wilson-Cowan 2}, \cite[Chapter 2]{Neural-Fields}. We argue that
two difficulties appear. First, since the connection matrices may be extremely
large, studying the corresponding Wilson-Cowan equations is only possible via
numerical simulations. Second, it seems that the discrete Wilson-Cowan model
is not a good approximation of the continuous Wilson-Cowan model, see
\cite[page 57]{Neural-Fields}. The Wilson-Cowan equations can be formally
discretized by replacing integrals with finite sums. But these discrete models
are relevant only when they are good approximations of the continuous models.
Finally, we want to mention that O. Sporns has proposed the hypothesis that
cortical connections are arranged in hierarchical self-similar patterns,
\cite{Sporns}.

\section{$p$-Adic Wilson-Cowan Models}

The previous section shows that the classical Wilson-Cowan can be formulated
on a large class of topological groups. This formulation does not use any
information about the geometry of the neural interaction, which is encoded in
the geometry of the group $\mathcal{G}$. The next step is to incorporate the
connection matrices into the Wilson-Cowan model, which requires selecting a
specific group. In this section, we propose the $p$-adic Wilson-Cowan models
where $\mathcal{G}$ is the ring of $p$-adic integers $\mathbb{Z}_{p}$.

\subsection{The $p$-adic integers}

This section reviews some basic results on $p$-adic analysis required in this
article. For a detailed exposition on $p$-adic analysis, the reader may
consult \cite{V-V-Z}-\cite{Taibleson}. For a quick review of $p$-adic analysis
the reader may consult \cite{Bocardo-Zuniga-2}.

From now on, $p$ denotes a fixed prime number. The ring of $p-$adic integers
$\mathbb{Z}_{p}$ is defined as the completion of the ring of integers
$\mathbb{Z}$ with respect to the $p-$adic norm $|\cdot|_{p}$, which is defined
as
\begin{equation}
|x|_{p}=%
\begin{cases}
0 & \text{if }x=0\\
p^{-\gamma} & \text{if }x=p^{\gamma}a\in\mathbb{Z},
\end{cases}
\label{p-norm}%
\end{equation}
where $a$ is an integers coprime with $p$. The integer $\gamma=ord_{p}%
(x):=ord(x)$, with $ord(0):=+\infty$, is called the\textit{\ }$p-$\textit{adic
order of} $x$.

Any non-zero $p-$adic integer $x$ has a unique expansion of the form%
\[
x=x_{k}p^{k}+x_{k+1}p^{k+1}+\ldots,\text{ }%
\]
with $x_{k}\neq0$, where $k$ is a non-negative integer, and the $x_{j}$s \ are
numbers from the set $\left\{  0,1,\ldots,p-1\right\}  $. There are natural
field operations, sum, and multiplication, on $p$-adic integers, see, e.g.,
\cite{Koblitz}. The norm (\ref{p-norm}) extends to $\mathbb{Z}_{p}$ as
$\left\vert x\right\vert _{p}=p^{-k}$ for a nonzero $p$-adic integer $x$.

The metric space $\left(  \mathbb{Z}_{p},\left\vert \cdot\right\vert
_{p}\right)  $ is a complete ultrametric space. Ultrametric means that
$\left\vert x+y\right\vert _{p}\leq\max\left\{  \left\vert x\right\vert
_{p},\left\vert y\right\vert _{p}\right\}  $. As a topological space
$\mathbb{Z}_{p}$\ is homeomorphic to a Cantor-like subset of the real line,
see, e.g., \cite{V-V-Z}-\cite{A-K-S}, \cite{Chistyakov}.

For $r\in\mathbb{N}$, denote by $B_{-r}(a)=\{x\in\mathbb{Z}_{p};\left\vert
x-a\right\vert _{p}\leq p^{-r}\}$ \textit{the ball of radius }$p^{-r}$
\textit{with center at} $a\in\mathbb{Z}_{p}$, and take $B_{-r}(0):=B_{-r}$.
{The ball $B_{0}$ equals \textit{the ring of }$p-$\textit{adic integers
}$\mathbb{Z}_{p}$.} We use $\Omega\left(  p^{r}\left\vert x-a\right\vert
_{p}\right)  $ to denote the characteristic function of the ball $B_{-r}(a)$.
Two balls in $\mathbb{Z}_{p}$ are either disjoint or one is contained in the
other. The balls are compact subsets, thus $\left(  \mathbb{Z}_{p},\left\vert
\cdot\right\vert _{p}\right)  $ is a compact topological space.

\subsubsection{Tree-like structures}

The set of $p$-adic integers modulo $p^{l}$, $l\geq1$, consists of all the
integers of the form $i=i_{0}+i_{1}p+\ldots+i_{l-1}p^{l-1}$. These numbers
form a complete set of representatives for the elements of the additive group
$G_{l}=\mathbb{Z}_{p}/p^{l}\mathbb{Z}_{p}$, which is isomorphic to the set of
integers $\mathbb{Z}/p^{l}\mathbb{Z}$ (written in base $p$) modulo $p^{l}$. By
restricting $\left\vert \cdot\right\vert _{p}$ to $G_{l}$, it becomes a normed
space, and $\left\vert G_{l}\right\vert _{p}=\left\{  0,p^{-\left(
l-1\right)  },\cdots,p^{-1},1\right\}  $. With the metric induced by
$\left\vert \cdot\right\vert _{p}$, $G_{l}$ becomes a finite ultrametric
space. In addition, $G_{l}$ can be identified with the set of branches
(vertices at the top level) of a rooted tree with $l+1$ levels and $p^{l}$
branches. By definition, the tree's root is the only vertex at level $0$.
There are exactly $p$ vertices at level $1$, which correspond with the
possible values of the digit $i_{0}$ in the $p$-adic expansion of $i$. Each of
these vertices is connected to the root by a non-directed edge. At level $k$,
with $2\leq k\leq l+1$, there are exactly $p^{k}$ vertices, \ each vertex
corresponds to a truncated expansion of $i$ of the form $i_{0}+\cdots
+i_{k-1}p^{k-1}$. The vertex corresponding to $i_{0}+\cdots+i_{k-1}p^{k-1}$ is
connected to a vertex $i_{0}^{\prime}+\cdots+i_{k-2}^{\prime}p^{k-2}$ at the
level $k-1$ if and only if $\left(  i_{0}+\cdots+i_{k-1}p^{k-1}\right)
-\left(  i_{0}^{\prime}+\cdots+i_{k-2}^{\prime}p^{k-2}\right)  $ is divisible
by $p^{k-1}$. See Figure \ref{Figure 3}. The balls $B_{-r}(a)=a+p^{r}%
\mathbb{Z}_{p}$ are infinite rooted trees.

\begin{figure}[ptb]
\begin{center}
\includegraphics[
height=2.7614in,
width=4.8879in
]{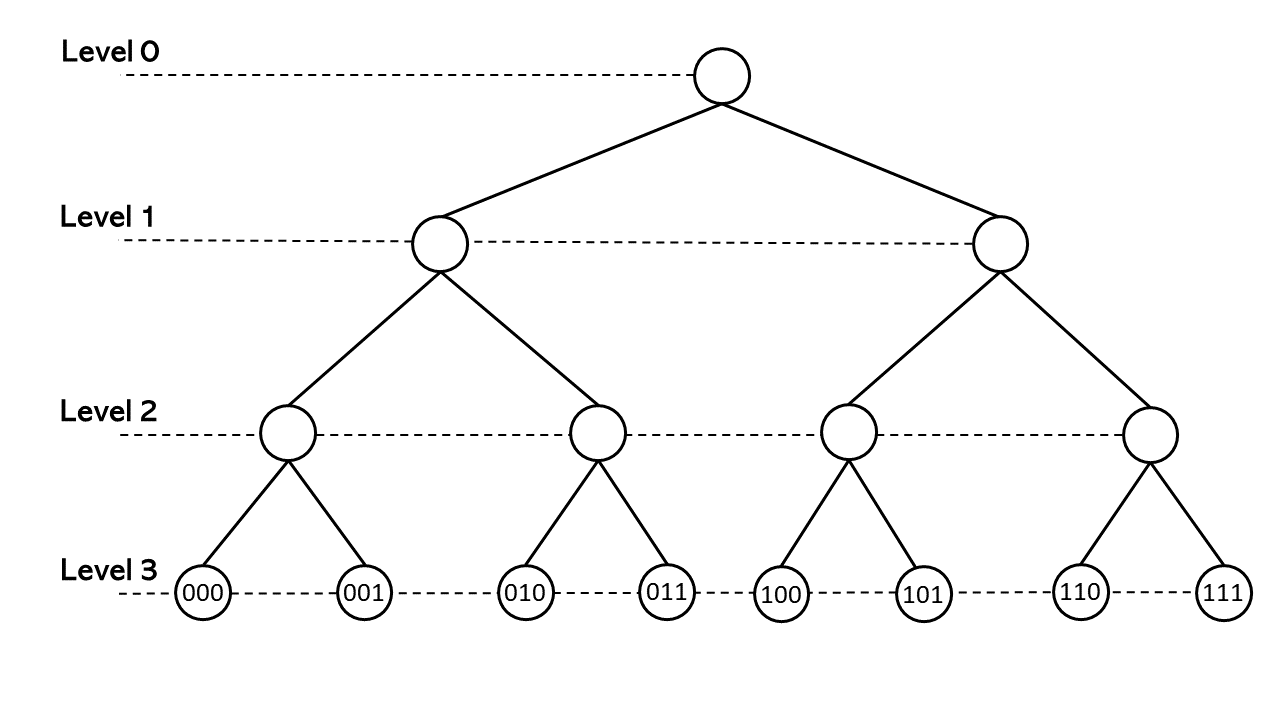}
\end{center}
\caption{The rooted tree associated with the group $\mathbb{Z}_{2}%
/2^{3}\mathbb{Z}_{2}$. The elements of $\mathbb{Z}_{2}/2^{3}\mathbb{Z}_{2}$
have the form $i=i_{0}+i_{1}2+i_{2}2^{2}$,$\;i_{0}$, $i_{1}$, $i_{2}%
\in\{0,1\}$. The distance satisfies $-\log_{2}\left\vert i-j\right\vert _{2}%
=$level of \noindent the first common ancestor of $i$, $j$.}%
\label{Figure 3}%
\end{figure}
%EndExpansion

\subsection{The Haar measure}

Since $(\mathbb{Z}_{p},+)$ is a compact topological group, there exists a Haar
measure $dx$, which is invariant under translations, i.e., $d(x+a)=dx$,
\cite{Halmos}. If we normalize this measure by the condition $\int
_{\mathbb{Z}_{p}}dx=1$, then $dx$ is unique. It follows immediately that
\[%
%TCIMACRO{\tint \limits_{B_{-r}(a)}}%
%BeginExpansion
{\textstyle\int\limits_{B_{-r}(a)}}
%EndExpansion
dx=%
%TCIMACRO{\tint \limits_{a+p^{r}\mathbb{Z}_{p}}}%
%BeginExpansion
{\textstyle\int\limits_{a+p^{r}\mathbb{Z}_{p}}}
%EndExpansion
dx=p^{-r}%
%TCIMACRO{\tint \limits_{\mathbb{Z}_{p}}}%
%BeginExpansion
{\textstyle\int\limits_{\mathbb{Z}_{p}}}
%EndExpansion
dy=p^{-r}\text{, }r\in\mathbb{N}\text{.}%
\]
In a few occasions, we use the two-dimensional Haar measure $dxdy$ of the
additive group $(\mathbb{Z}_{p}\times\mathbb{Z}_{p},+)$ normalize this measure
by the condition $\int_{\mathbb{Z}_{p}}\int_{\mathbb{Z}_{p}}dxdy=1$. For a a
quick review of the integration in the $p$-adic framework, the reader may
consult \cite{Bocardo-Zuniga-2} and the references therein.

\subsection{The Bruhat-Schwartz space in the unit ball}

A real-valued function $\varphi$ defined on $\mathbb{Z}_{p}$ is \textit{called
Bruhat-Schwartz function (or a test function)} if for any $x\in\mathbb{Z}_{p}$
there exist an integer $l\in\mathbb{N}$ such that%
\begin{equation}
\varphi(x+x^{\prime})=\varphi(x)\text{ for any }x^{\prime}\in B_{l}.
\label{local_constancy}%
\end{equation}
The $\mathbb{R}$-vector space of Bruhat-Schwartz functions supported in the
unit ball is denoted by $\mathcal{D}(\mathbb{Z}_{p})$. For $\varphi
\in\mathcal{D}(\mathbb{Z}_{p})$, the largest number $l=l(\varphi)$ satisfying
(\ref{local_constancy}) is called \textit{the exponent of local constancy (or
the parameter of constancy) of} $\varphi$. A function $\varphi$ in
$\mathcal{D}(\mathbb{Z}_{p})$ can be written as%
\[
\varphi\left(  x\right)  =%
%TCIMACRO{\dsum \limits_{j=1}^{M}}%
%BeginExpansion
{\displaystyle\sum\limits_{j=1}^{M}}
%EndExpansion
\varphi\left(  \widetilde{x}_{j}\right)  \Omega\left(  p^{-r_{j}}\left\vert
x-\widetilde{x}_{j}\right\vert _{p}\right)  ,
\]
where the $\widetilde{x}_{j}$, $j=1,\ldots,M$, are points in $\mathbb{Z}_{p}$,
the $r_{j}$, $j=1,\ldots,M$, are non-negative integers, and $\Omega\left(
p^{r_{j}}\left\vert x-\widetilde{x}_{j}\right\vert _{p}\right)  $ denotes the
characteristic function of the ball $B_{-r_{j}}(\widetilde{x}_{j}%
)=\widetilde{x}_{j}+p^{r_{j}}\mathbb{Z}_{p}$.

We denote by $\mathcal{D}^{l}(\mathbb{Z}_{p})$ the $\mathbb{R}$-vector space
of all test functions of the form%
\[
\varphi\left(  x\right)  =%
%TCIMACRO{\tsum \limits_{i\in G_{l}}}%
%BeginExpansion
{\textstyle\sum\limits_{i\in G_{l}}}
%EndExpansion
\varphi\left(  i\right)  \Omega\left(  p^{l}\left\vert x-i\right\vert
_{p}\right)  \text{, \ }\varphi\left(  i\right)  \in\mathbb{R}\text{,}%
\]
where $i=i_{0}+i_{1}p+\ldots+i_{l-1}p^{l-1}\in G_{l}=\mathbb{Z}_{p}%
/p^{l}\mathbb{Z}_{p}$, $l\geq1$. Notice that $\varphi$ is supported on
$\mathbb{Z}_{p}$ and that $\mathcal{D}(\mathbb{Z}_{p})=\cup_{l\in\mathbb{N}%
}\mathcal{D}^{l}(\mathbb{Z}_{p})$.

The space $\mathcal{D}^{l}(\mathbb{Z}_{p})$ is a finite-dimensional vector
space spanned by the basis $\left\{  \Omega\left(  p^{l}\left\vert
x-i\right\vert _{p}\right)  \right\}  _{i\in G_{l}}$. By identifying
$\varphi\in\mathcal{D}^{l}(\mathbb{Z}_{p})$ with the column vector $\left[
\varphi\left(  i\right)  \right]  _{i\in G_{l}}\in\mathbb{R}^{\#G_{l}}$, we
get that $\mathcal{D}^{l}(\mathbb{Z}_{p})$ is isomorphic to $\mathbb{R}%
^{\#G_{l}}$ endowed with the norm%
\[
\left\Vert \left[  \varphi\left(  i\right)  \right]  _{i\in G_{l}^{N}%
}\right\Vert =\max_{i\in G_{l}}\left\vert \varphi\left(  i\right)  \right\vert
.
\]
Furthermore,
\[
\mathcal{D}^{l}\hookrightarrow\mathcal{D}^{l+1}\hookrightarrow\mathcal{D}%
(\mathbb{Z}_{p}),
\]
where $\hookrightarrow$ denotes a continuous embedding.

\subsection{The $p$-adic version and discrete version of the Wilson-Cowan
models}

The $p$-adic Wilson-Cowan model is obtained by taking $\mathcal{G}%
=\mathbb{Z}_{p}$ and $d\mu=dx$ in (\ref{Cauchy}).

On the other hand, $\left\Vert f\right\Vert _{1}\leq\left\Vert f\right\Vert
_{\infty}$, and
\[
L^{1}(\mathbb{Z}_{p})\supseteq L^{\infty}(\mathbb{Z}_{p})\supseteq
\mathcal{C}(\mathbb{Z}_{p})\supseteq\mathcal{D}(\mathbb{Z}_{p}),
\]
where $\mathcal{C}(\mathbb{Z}_{p})$ denotes the $\mathbb{R}$-space of
continuous functions on $\mathbb{Z}_{p}$ endowed with the norm $\left\Vert
\cdot\right\Vert _{\infty}$. Futhermore, $\mathcal{D}(\mathbb{Z}_{p})$ is
dense in $L^{1}(\mathbb{Z}_{p})$, \cite[Proposition 4.3.3]{A-K-S}, and
consequently, it is also dense in $L^{\infty}(\mathbb{Z}_{p})$ and
$\mathcal{C}(\mathbb{Z}_{p})$.

For the sake of simplicity, we assume that $w_{EE}$, $w_{IE}$, $w_{EI}$,
$w_{II}\in\mathcal{C}(\mathbb{Z}_{p})$, and $h_{E}\left(  x,t\right)  $,
$h_{I}\left(  x,t\right)  \in\mathcal{C}(\left[  0,\infty\right]
,\mathcal{C}(\mathbb{Z}_{p}))$. Theorem \ref{Theorem1} is still valid under
these hypotheses. We use the theory of approximation of evolution equations to
construct good discretizations of the $p$-adic Wilson-Cowan system, see, e.g.,
\cite[Section 5.4]{Milan}.

This theory requires the following hypotheses.

(\textbf{A}). (a) $\mathcal{X}=\left(  \mathcal{C}(\mathbb{Z}_{p}%
)\times\mathcal{C}(\mathbb{Z}_{p})\right)  $, $\mathcal{X}_{l}=\left(
\mathcal{D}^{l}(\mathbb{Z}_{p})\times\mathcal{D}^{l}(\mathbb{Z}_{p})\right)
$, $l\geq1$, endowed with the norm $\left\Vert f\right\Vert =\left\Vert
\left(  f_{1},f_{2}\right)  \right\Vert =\max\left\{  \left\Vert \left(
f_{1}\right)  \right\Vert _{\infty},\left\Vert \left(  f_{2}\right)
\right\Vert _{\infty}\right\}  $ are Banach spaces. It is relevant to mention
that $\mathcal{X}_{l}$ is a subspace of $\mathcal{X}$, and that $\mathcal{X}%
_{l}$ is a subspace of $\mathcal{X}_{l+1}$.

(b) The operator
\[%
\begin{array}
[c]{cccc}%
\boldsymbol{P}_{l}: & \mathcal{X} & \rightarrow & \mathcal{X}_{l}\\
&  &  & \\
& f\left(  x\right)  & \rightarrow & \left(  \boldsymbol{P}_{l}f\right)
\left(  x\right)  =%
%TCIMACRO{\tsum \limits_{i\in G_{l}}}%
%BeginExpansion
{\textstyle\sum\limits_{i\in G_{l}}}
%EndExpansion
f\left(  i\right)  \Omega\left(  p^{l}\left\vert x-i\right\vert _{p}\right)
\end{array}
\]
is linear and bounded, i.e., $\boldsymbol{P}_{l}\in\mathbb{B}(\mathcal{X}%
,\mathcal{X}_{l})$, and $\left\Vert \boldsymbol{P}_{l}f\right\Vert
\leq\left\Vert f\right\Vert $, for every $f\in\mathcal{X}$.

(c) We set $\boldsymbol{1}_{l}:\mathcal{X}_{l}\rightarrow\mathcal{X}$ to be
the identity operator. Then $\boldsymbol{1}_{l}\in\mathbb{B}(\mathcal{X}%
_{l},\mathcal{X})$, and $\left\Vert \boldsymbol{1}_{l}f\right\Vert =\left\Vert
f\right\Vert $, for every $f\in\mathcal{X}_{l}$.

(d) $\boldsymbol{P}_{l}\boldsymbol{1}_{l}f=f$, for $l\geq1$, $f\in
\mathcal{X}_{l}$.

(\textbf{B}), (\textbf{C}). The Wilson-Cowan system, see (\ref{Cauchy}),
involves the operator $\frac{1}{\tau}\mathbf{1}$, where $\mathbf{1}%
\in\mathbb{B}(\mathcal{X},\mathcal{X})$ is the identity operator. As
approximation, we use $\mathbf{1}\in\mathbb{B}(\mathcal{X}_{l},\mathcal{X}%
_{l})$, for every $l\geq1$. Furthermore,
\[
\lim_{l\rightarrow\infty}\left\Vert \mathbf{P}_{l}f-f\right\Vert =0,
\]
see \cite[Lemma 1]{Zuniga-Nonlinearity}.

(\textbf{D}). For $t\in\left(  0,\infty\right)  $, $\frac{1}{\tau
}\boldsymbol{H}(s,f):\left[  0,t\right]  \times\mathcal{X}\rightarrow
\mathcal{X}$ is a continuous and such that for some $L<\infty$,%
\[
\left\Vert \frac{1}{\tau}\boldsymbol{H}(s,f)-\frac{1}{\tau}\boldsymbol{H}%
(s,g)\right\Vert \leq L\left\Vert f-g\right\Vert \text{,}%
\]
for $0\leq s\leq t$, $f$, $g\in\mathcal{X}$. This assertion is a consequence
of the fact that $\boldsymbol{H}:\mathcal{X}\rightarrow\mathcal{X}$ is a
well-defined globally Lipschitz, see Lemma \ref{Lemma1}.

We use the notation $E\left(  t\right)  =E\left(  \cdot,t\right)  $, $I\left(
t\right)  =I\left(  \cdot,t\right)  \in\mathcal{C}^{1}(\left[  0,T\right)
,\mathcal{X})$, and for the approximations $E_{l}\left(  t\right)
=E_{l}\left(  \cdot,t\right)  $, $I_{l}\left(  t\right)  =I_{l}\left(
\cdot,t\right)  \in\mathcal{C}^{1}(\left[  0,T\right)  ,\mathcal{X})$. The
space discretization of the $p$-adic Wilson-Cowan system (\ref{Cauchy}) is%
\begin{equation}
\left\{
\begin{array}
[c]{l}%
\frac{\partial}{\partial t}\left[
\begin{array}
[c]{c}%
E_{l}\left(  t\right) \\
I_{l}\left(  t\right)
\end{array}
\right]  +\frac{1}{\tau}\left[
\begin{array}
[c]{c}%
E_{l}\left(  t\right) \\
I_{l}\left(  t\right)
\end{array}
\right]  =\frac{1}{\tau}\boldsymbol{P}_{l}\left(  \boldsymbol{H}(\left[
\begin{array}
[c]{c}%
E_{l}\left(  t\right) \\
I_{l}\left(  t\right)
\end{array}
\right]  )\right)  ,\\
\\
\left[
\begin{array}
[c]{c}%
E_{l}\left(  0\right) \\
I_{l}\left(  0\right)
\end{array}
\right]  =\boldsymbol{P}_{l}\left(  \left[
\begin{array}
[c]{c}%
E_{0}\left(  x\right) \\
I_{0}\left(  x\right)
\end{array}
\right]  \right)  \in\mathcal{X}_{l}.
\end{array}
\right.  \label{Cauchy 2}%
\end{equation}
The next step is to obtain an explicit expression for the space discretization
given in (\ref{Cauchy 2}). We need the following formulae.

\begin{remark}
\label{Nota 1}Take
\[
w(x)=%
%TCIMACRO{\dsum \limits_{j\in G_{l}}}%
%BeginExpansion
{\displaystyle\sum\limits_{j\in G_{l}}}
%EndExpansion
w\left(  j\right)  \Omega\left(  p^{l}\left\vert x-j\right\vert _{p}\right)
\text{, }\phi(y)=%
%TCIMACRO{\dsum \limits_{i\in G_{l}}}%
%BeginExpansion
{\displaystyle\sum\limits_{i\in G_{l}}}
%EndExpansion
\phi\left(  i\right)  \Omega\left(  p^{l}\left\vert y-i\right\vert
_{p}\right)  \in\mathcal{D}^{l}(\mathbb{Z}_{p}).
\]
Then%
\begin{multline*}
\left(  w\ast\phi\right)  \left(  x\right)  =%
%TCIMACRO{\dint \limits_{\mathbb{Z}_{p}}}%
%BeginExpansion
{\displaystyle\int\limits_{\mathbb{Z}_{p}}}
%EndExpansion
w\left(  x-y\right)  \phi\left(  y\right)  dy=\\%
%TCIMACRO{\dsum \limits_{k\in G_{l}}}%
%BeginExpansion
{\displaystyle\sum\limits_{k\in G_{l}}}
%EndExpansion
\left\{
%TCIMACRO{\dsum \limits_{i\in G_{l}}}%
%BeginExpansion
{\displaystyle\sum\limits_{i\in G_{l}}}
%EndExpansion
w\left(  k-i\right)  \phi\left(  i\right)  \right\}  \Omega\left(
p^{l}\left\vert x-k\right\vert _{p}\right)  \in\mathcal{D}^{l}(\mathbb{Z}%
_{p}).
\end{multline*}
Indeed,
\[
\left(  w\ast\phi\right)  \left(  x\right)  =%
%TCIMACRO{\dsum \limits_{j\in G_{l}}}%
%BeginExpansion
{\displaystyle\sum\limits_{j\in G_{l}}}
%EndExpansion
\text{ }%
%TCIMACRO{\dsum \limits_{i\in G_{l}}}%
%BeginExpansion
{\displaystyle\sum\limits_{i\in G_{l}}}
%EndExpansion
w\left(  j\right)  \phi\left(  i\right)
%TCIMACRO{\dint \limits_{\mathbb{Z}_{p}}}%
%BeginExpansion
{\displaystyle\int\limits_{\mathbb{Z}_{p}}}
%EndExpansion
\Omega\left(  p^{l}\left\vert x-y-j\right\vert _{p}\right)  \Omega\left(
p^{l}\left\vert y-i\right\vert _{p}\right)  dy.
\]
Changing variables as $z=y-i$, $dz=dx$, in the integral,
\begin{align*}
\left(  w\ast\phi\right)  \left(  x\right)   &  =%
%TCIMACRO{\dsum \limits_{j\in G_{l}}}%
%BeginExpansion
{\displaystyle\sum\limits_{j\in G_{l}}}
%EndExpansion
\text{ }%
%TCIMACRO{\dsum \limits_{i\in G_{l}}}%
%BeginExpansion
{\displaystyle\sum\limits_{i\in G_{l}}}
%EndExpansion
w\left(  j\right)  \phi\left(  i\right)
%TCIMACRO{\dint \limits_{\mathbb{Z}_{p}}}%
%BeginExpansion
{\displaystyle\int\limits_{\mathbb{Z}_{p}}}
%EndExpansion
\Omega\left(  p^{l}\left\vert x-z-\left(  i+j\right)  \right\vert _{p}\right)
\Omega\left(  p^{l}\left\vert z\right\vert _{p}\right)  dz\\
&  =%
%TCIMACRO{\dsum \limits_{j\in G_{l}}}%
%BeginExpansion
{\displaystyle\sum\limits_{j\in G_{l}}}
%EndExpansion
\text{ }%
%TCIMACRO{\dsum \limits_{i\in G_{l}}}%
%BeginExpansion
{\displaystyle\sum\limits_{i\in G_{l}}}
%EndExpansion
w\left(  j\right)  \phi\left(  i\right)
%TCIMACRO{\dint \limits_{p^{l}\mathbb{Z}_{p}}}%
%BeginExpansion
{\displaystyle\int\limits_{p^{l}\mathbb{Z}_{p}}}
%EndExpansion
\Omega\left(  p^{l}\left\vert x-z-\left(  i+j\right)  \right\vert _{p}\right)
dz.
\end{align*}
Now, by taking $k=i+j$, and using the fact that $G_{l}$ is an additive group,%
\begin{align*}
\left(  w\ast\phi\right)  \left(  x\right)   &  =%
%TCIMACRO{\dsum \limits_{k\in G_{l}}}%
%BeginExpansion
{\displaystyle\sum\limits_{k\in G_{l}}}
%EndExpansion
\text{ }%
%TCIMACRO{\dsum \limits_{i\in G_{l}}}%
%BeginExpansion
{\displaystyle\sum\limits_{i\in G_{l}}}
%EndExpansion
w\left(  k-i\right)  \phi\left(  i\right)
%TCIMACRO{\dint \limits_{p^{l}\mathbb{Z}_{p}}}%
%BeginExpansion
{\displaystyle\int\limits_{p^{l}\mathbb{Z}_{p}}}
%EndExpansion
\Omega\left(  p^{l}\left\vert x-z-k\right\vert _{p}\right)  dz\\
&  =%
%TCIMACRO{\dsum \limits_{k\in G_{l}}}%
%BeginExpansion
{\displaystyle\sum\limits_{k\in G_{l}}}
%EndExpansion
\left\{
%TCIMACRO{\dsum \limits_{i\in G_{l}}}%
%BeginExpansion
{\displaystyle\sum\limits_{i\in G_{l}}}
%EndExpansion
w\left(  k-i\right)  \phi\left(  i\right)  \right\}  \Omega\left(
p^{l}\left\vert x-k\right\vert _{p}\right)  .
\end{align*}

\end{remark}

\begin{remark}
\label{Nota 2}Take $S:\mathbb{R\rightarrow}\mathbb{R}$. Then
\[
S\left(
%TCIMACRO{\dsum \limits_{i\in G_{l}}}%
%BeginExpansion
{\displaystyle\sum\limits_{i\in G_{l}}}
%EndExpansion
\phi\left(  i\right)  \Omega\left(  p^{l}\left\vert y-i\right\vert
_{p}\right)  \right)  =%
%TCIMACRO{\dsum \limits_{i\in G_{l}}}%
%BeginExpansion
{\displaystyle\sum\limits_{i\in G_{l}}}
%EndExpansion
S\left(  \phi\left(  i\right)  \right)  \Omega\left(  p^{l}\left\vert
y-i\right\vert _{p}\right)  .
\]
This formula follows from the fact that the supports of the functions
$\Omega\left(  p^{l}\left\vert y-i\right\vert _{p}\right)  $, $i\in G_{l}$,
are disjoint.
\end{remark}

The space discretization of the integro-differential equation in
(\ref{Cauchy 2}) is obtained by computing the term $\boldsymbol{P}_{l}\left(
\boldsymbol{H}(\left[
\begin{array}
[c]{c}%
E_{l}\left(  t\right) \\
I_{l}\left(  t\right)
\end{array}
\right]  )\right)  $ using Remarks \ref{Nota 1}-\ref{Nota 2}. By using the
notation
\[
w_{l}^{AB}=\left[  w_{i}^{AB}\right]  _{i\in G_{l}},\ w_{i}^{AB}=w_{AB}\left(
i\right)  \text{, for }A,B\in\left\{  E,I\right\}  ,
\]%
\[
E_{l}(t)=\left[  E_{i}\left(  t\right)  \right]  _{i\in G_{l}}\text{, }%
E_{i}\left(  t\right)  =E\left(  i,t\right)  \text{, and }I_{l}(t)=\left[
I_{i}\left(  t\right)  \right]  _{i\in G_{l}}\text{, }I_{i}\left(  t\right)
=I\left(  i,t\right)  ,
\]%
\[
h_{l}^{A}\left(  t\right)  =\left[  h_{i}^{A}\left(  t\right)  \right]  _{i\in
G_{l}}\text{, }h_{i}^{A}\left(  t\right)  =h_{A}\left(  i,t\right)  \text{,
for }A\in\left\{  E,I\right\}  ,
\]
and for $\phi_{l}=\left[  \phi_{i}\right]  _{i\in G_{l}}$, $\theta_{l}=\left[
\theta_{i}\right]  _{i\in G_{l}}$,
\[
\phi_{l}\ast\theta_{l}=\left[  \text{ }%
%TCIMACRO{\dsum \limits_{k\in G_{l}}}%
%BeginExpansion
{\displaystyle\sum\limits_{k\in G_{l}}}
%EndExpansion
\phi_{i-k}\theta_{k}\right]  _{i\in G_{l}}.
\]
With this notation, the announced discretization takes the following form:%
\[
\left\{
\begin{array}
[c]{cc}%
\tau\frac{\partial E_{l}(t)}{\partial t}= & -E_{l}(t)+S_{E}\left(  w_{l}%
^{EE}\ast E_{l}(t)-w_{l}^{EI}\ast I_{l}(t)+h_{l}^{E}\left(  t\right)  \right)
\\
& \\
\tau\frac{\partial I_{l}(t)}{\partial t}= & -I_{l}(t)+S_{I}\left(  w_{l}%
^{IE}\ast E_{l}(t)-w_{l}^{II}(x)\ast I_{l}(t)+h_{l}^{I}\left(  t\right)
\right)  .
\end{array}
\right.
\]

\begin{theorem}
\label{Theorem2}Take $r_{I}=$ $r_{E}=0$, $\left[
\begin{array}
[c]{c}%
E_{0}\left(  x\right) \\
I_{0}\left(  x\right)
\end{array}
\right]  \in\mathcal{X}$, and $T\in\left(  0,\infty\right)  $. Let $\left[
\begin{array}
[c]{c}%
E\left(  t\right) \\
I\left(  t\right)
\end{array}
\right]  \in$ $\mathcal{C}^{1}(\left[  0,T_{0}\right)  ,\mathcal{X})$ be the
solution (\ref{Equation_E})-(\ref{Equation_I}) given in Theorem \ref{Theorem1}%
. Let $\left[
\begin{array}
[c]{c}%
E_{l}\left(  t\right) \\
I_{l}\left(  t\right)
\end{array}
\right]  $\ be the solution of the Cauchy problem (\ref{Cauchy 2}). Then%
\[
\lim_{l\rightarrow\infty}\sup_{0\leq t\leq T}\left\Vert \left[
\begin{array}
[c]{c}%
E_{l}\left(  t\right) \\
I_{l}\left(  t\right)
\end{array}
\right]  -\left[
\begin{array}
[c]{c}%
E\left(  t\right) \\
I\left(  t\right)
\end{array}
\right]  \right\Vert =0.
\]

\end{theorem}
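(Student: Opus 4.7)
The plan is to invoke the general convergence theorem for semilinear evolution equations under projection-based discretization, as developed in \cite[Section 5.4]{Milan}. All of the structural hypotheses have already been set up: property (A) gives the Banach space framework with projections $\boldsymbol{P}_l$ and embeddings $\boldsymbol{1}_l$ satisfying $\boldsymbol{P}_l\boldsymbol{1}_l = \mathbf{1}$; properties (B)--(C) give the consistency $\|\boldsymbol{P}_l f - f\| \to 0$ for every $f\in\mathcal{X}$; and property (D) holds by Lemma \ref{Lemma1}, since $r_E = r_I = 0$ makes $\frac{1}{\tau}\boldsymbol{H}$ globally Lipschitz on $\mathcal{X}$ with some constant $L$. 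The linear operator $-\frac{1}{\tau}\mathbf{1}$ generates the trivial uniformly continuous semigroup $e^{-t/\tau}\mathbf{1}$ both on $\mathcal{X}$ and on each $\mathcal{X}_l$, and this semigroup obviously commutes with $\boldsymbol{P}_l$.

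The natural route is via the mild formulations. Writing $u(t) = [E(t),I(t)]^{\mathsf T}$ and $u_l(t) = [E_l(t),I_l(t)]^{\mathsf T}$, variation of parameters yields
\begin{equation*}
u(t) = e^{-t/\tau} u_0 + \tfrac{1}{\tau}\!\int_0^t e^{-(t-s)/\tau}\boldsymbol{H}(u(s),s)\, ds,
\end{equation*}
\begin{equation*}
u_l(t) = e^{-t/\tau} \boldsymbol{P}_l u_0 + \tfrac{1}{\tau}\!\int_0^t e^{-(t-s)/\tau}\boldsymbol{P}_l\boldsymbol{H}(u_l(s),s)\, ds.
\end{equation*}
Identifying $u_l(t)$ with $\boldsymbol{1}_l u_l(t) \in \mathcal{X}$, subtracting, and inserting the telescoping term $\boldsymbol{P}_l\boldsymbol{H}(u(s),s)$, one obtains
\begin{equation*}
\|u(t)-u_l(t)\| \le \|u_0 - \boldsymbol{P}_l u_0\| + \tfrac{1}{\tau}\!\int_0^t \varepsilon_l(s)\, ds + \tfrac{L}{\tau}\!\int_0^t \|u(s)-u_l(s)\|\, ds,
\end{equation*}
where $\varepsilon_l(s) := \|\boldsymbol{H}(u(s),s) - \boldsymbol{P}_l\boldsymbol{H}(u(s),s)\|$ is the consistency error on the reference trajectory, and I used $\|\boldsymbol{P}_l\|\le 1$ together with the Lipschitz bound on $\boldsymbol{H}$. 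Gronwall's inequality then gives
\begin{equation*}
\sup_{t\in[0,T]} \|u(t)-u_l(t)\| \le e^{LT/\tau}\Bigl(\|u_0 - \boldsymbol{P}_l u_0\| + \tfrac{1}{\tau}\!\int_0^T \varepsilon_l(s)\, ds\Bigr),
\end{equation*}
so it remains to show that both terms on the right vanish as $l\to\infty$.

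The first term tends to zero by the consistency assertion recalled in (C). The main technical point, and the real obstacle, is to show that $\int_0^T \varepsilon_l(s)\, ds \to 0$. Pointwise in $s$ this follows from $\boldsymbol{P}_l \boldsymbol{H}(u(s),s) \to \boldsymbol{H}(u(s),s)$ in $\mathcal{X}$, but for the integral I need some uniformity. The clean way is to note that $s \mapsto u(s)$ is continuous from $[0,T]$ into $\mathcal{X}$ by Theorem \ref{Theorem1}, hence $s \mapsto \boldsymbol{H}(u(s),s)$ is continuous by Lemma \ref{Lemma2}(i), so its image $K\subset\mathcal{X}$ is compact. Since $\|\boldsymbol{P}_l\|\le 1$ uniformly and $\boldsymbol{P}_l \to \mathbf{1}$ pointwise, the convergence is automatically uniform on the compact set $K$; therefore $\sup_{s\in[0,T]}\varepsilon_l(s) \to 0$, and dominated convergence (with the bound $\varepsilon_l(s)\le 2\sup_{g\in K}\|g\|$) finishes the estimate. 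Combining these two ingredients with Gronwall yields the uniform convergence claimed in the theorem.
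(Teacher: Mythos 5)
Your argument is correct, but it takes a different route from the paper. The paper's proof is essentially a verification-plus-citation: it checks that the discretized problem is itself well-posed (so that its solution exists, is unique, and stays in the finite-dimensional subspace $\mathcal{X}_l$, via Theorem \ref{Theorem1} and \cite[Theorem 5.2.2]{Milan}), and then invokes the abstract approximation theorem \cite[Theorem 5.4.7]{Milan} on the strength of hypotheses (A)--(D). You instead reprove the relevant special case from scratch: mild formulations, the telescoping insertion of $\boldsymbol{P}_l\boldsymbol{H}(u(s),s)$, Gronwall, and the key observation that strong convergence $\boldsymbol{P}_l\to\mathbf{1}$ of uniformly bounded operators is uniform on the compact trajectory $K=\{\boldsymbol{H}(u(s),s):s\in[0,T]\}$. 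This buys an explicit error bound $e^{LT/\tau}\bigl(\|u_0-\boldsymbol{P}_lu_0\|+\tfrac{T}{\tau}\sup_s\varepsilon_l(s)\bigr)$ and makes the mechanism transparent, at the cost of reproving what the cited theorem already packages; the equicontinuity-on-compacts step is exactly the point the abstract theorem is designed to hide. Two small remarks: you take the existence of $u_l$ on all of $[0,T]$ for granted (defensible, since the theorem statement presupposes it, but the paper spends half its proof on precisely this point, including the fact that the solution remains in $\mathcal{X}_l$); and your argument tacitly uses that $\boldsymbol{H}(u(s),s)$ lands in $\mathcal{C}(\mathbb{Z}_p)\times\mathcal{C}(\mathbb{Z}_p)$, where the strong convergence of (C) is actually available --- this holds because the kernels and inputs are assumed continuous in the discretization setup, and is worth a sentence. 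Also, once $\sup_s\varepsilon_l(s)\to 0$ the appeal to dominated convergence is redundant.
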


\begin{proof}
We first notice that Theorem \ref{Theorem1} is valid for the Cauchy problem
(\ref{Cauchy 2}), more precisely, this problem has a unique solution $\left[
\begin{array}
[c]{c}%
E_{l}\left(  t\right) \\
I_{l}\left(  t\right)
\end{array}
\right]  $ in $\mathcal{C}^{1}(\left[  0,T_{0}\right)  ,\mathcal{X}_{l})$
satisfying properties akin to the ones stated in Theorem \ref{Theorem1}. Since
$\mathcal{X}_{l}$ is a subspace of $\mathcal{X}$, by applying Theorem
\ref{Theorem1} to the Cauchy problem (\ref{Cauchy 2}), we get the existence of
a unique solution $\left[
\begin{array}
[c]{c}%
E_{l}\left(  t\right) \\
I_{l}\left(  t\right)
\end{array}
\right]  $ in $\mathcal{C}^{1}(\left[  0,T_{0}\right)  ,\mathcal{X})$
satisfying the properties announced in Theorem \ref{Theorem1}. To show that
the solution $\left[
\begin{array}
[c]{c}%
E_{l}\left(  t\right) \\
I_{l}\left(  t\right)
\end{array}
\right]  $ belongs to $\mathcal{C}(\left[  0,T_{0}\right)  ,\mathcal{X}_{l})$,
we use \cite[Theorem 5.2.2]{Milan}. For similar reasoning, the reader may
consult Remark 2, and the proof of Theorem 1 in \cite{ZZ1}. The proof of the
the theorem follows from hypotheses A, B, C, and D by \cite[Theorem
5.4.7]{Milan}. For similar reasoning, the reader may consult the proof of
Theorem 4 in \cite{ZZ1}.
\end{proof}

\section{\label{Section_4}Numerical simulations}

\begin{remark}
%We use heat maps to represent the simulations of $p$-adic discrete Wilson-Cowan Equation (\ref{Cauchy 2}). By convenience we add the tree representation of $G_{l}$. The color of the heat maps represents the value of the function in a particular neuron, i.e., a element of $G_{l}$. For example, taking $p=2$, $l=4$, and $\phi(x)= \Omega(2^{4}\vert x \vert_{2})- \Omega(2^{4}\vert x -2 \vert_{2})+ \Omega(2^{4}\vert x -1\vert_{2} )+\Omega(2^{4}\vert x-7 \vert_{2})$ its heat map is Figure \ref{Figure0}. If the function depend of two variables $\phi(x,t)$ where $x\in\mathbb{Z}_{p}$ and $t\in\mathbb{R}$, its heat map color represent the value of $\phi(x,t)$ in that particular time $t$ and neuron $x$. 

We use heat maps to visualize approximations of the solutions of $p$-adic discrete Wilson-Cowan Equations (\ref{Cauchy 2}). The vertical axis gives the position, which is a truncated $p$-adic number. These numbers correspond to a rooted tree's vertices at the top level, i.e., $G_{l}$; see Figure \ref{Figure 3}. By convenience, we include a representation of this tree. The heat maps' colors represent the solutions' values in a particular neuron. For instance, take $p=2$, $l=4$, and 
\begin{equation}\label{Eq:function_heat}
\phi(x)= \Omega(2^{4}\vert x \vert_{2})-\Omega(2^{4}\vert x -2 \vert_{2})+ \Omega(2^{4}\vert x -1\vert_{2})+\Omega(2^{4}\vert x-7 \vert_{2}).
\end{equation}
The corresponding heat map is shown in Figure \ref{Figure0}. 
If the function depends on two variables, say $\phi(x,t)$, where $x\in\mathbb{Z}_{p}$ and $t\in\mathbb{R}$, the corresponding heat map color represents the value of $\phi(x,t)$ at time $t$ and neuron $x$.

\begin{figure}[h]
\begin{center}
\includegraphics[width=0.5\textwidth]{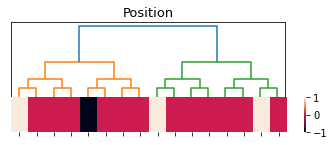}
\end{center}
\caption{Heat map of the function $\phi(x)$, see  (\ref{Eq:function_heat}). Here,  $\phi(0)=\phi(1)=\phi(7)=1$ is white,
$\phi(2)=-1$ is black, and $\phi(x)=0$ is red for $x\neq 0,1,7,2$.}%
\label{Figure0}%
\end{figure}
\end{remark}

We take $\tau=10$, $r_{I}=r_{E}=1$, $p=3$, $l=6$,%
\[
w_{AB}(x)=b_{AB}\exp(\sigma_{AB})-b_{AB}\exp(\sigma_{AB}|x|_{p})\text{, for
}A,B\in\left\{  E,I\right\}  ,
\]
and%
\[
S_{A}(z)=\frac{1}{1+\exp(-v_{A}(z-\theta_{A}))}-\frac{1}{1+\exp(v_{A}%
\theta_{A})}\text{, for }z\in\mathbb{R},\text{ }A\in\left\{  E,I\right\}  .
\]
The kernel $w_{AB}(x)$ is a decresing function of $|x|_{p}$. Thus, near
neurons interact strongly. $S_{A}(z)$ is a sigmoid function satisfying
$S_{A}(0)=0$.

\subsection{Numerical simulation\ 1}

The purpose of this experiment is to show the response of the $p$-adic
Wilson-Cowan network to a short pulse, and a constant stimulus. See Figure
\ref{Figure 4}, \ref{Figure 5}, \ref{Figure 6}. Our results are consistent
with the results obtained by Cowan and Wilson in \cite[Section 2.2.1-Section
2.2.5]{Wilson-Cowan 2}. The pulses are
\begin{equation}
h_{E}(x,t)=3.7\Omega(p^{2}|x-4|_{p})1_{[0,\delta]}(t)\text{, for }%
x\in\mathbb{Z}_{p}\text{, }t\in\left[  0,\delta\right]  \text{,}
\label{Eq_Pulse_1}%
\end{equation}%
\begin{equation}
h_{I}(x,t)=Q\Omega(|x-4|_{p})1_{[0,\delta]}(t)\text{, for }x\in\mathbb{Z}%
_{p}\text{, }t\in\left[  0,\delta\right]  \text{,} \label{Eq_Pulse_2}%
\end{equation}
where $1_{[0,\delta]}(t)$ is the characteristic function of the time interval
$\left[  0,\delta\right]  $, $\delta>0$. We use the following parameters:
$v_{E}=2.75$, $v_{I}=0.3$, $b_{EE}=1.5$, $\sigma_{EE}=4$, $b_{II}=1.8$,
$\sigma_{II}=3$, $\theta_{E}=9$, $\theta_{I}=17$, $b_{IE}=1.35$, $\sigma
_{IE}=6$, $b_{EI}=1.35$, $\sigma_{EI}=6$.

\begin{figure}[h]
\begin{center}
\includegraphics[
height=1.8991in,
width=2.597in
]{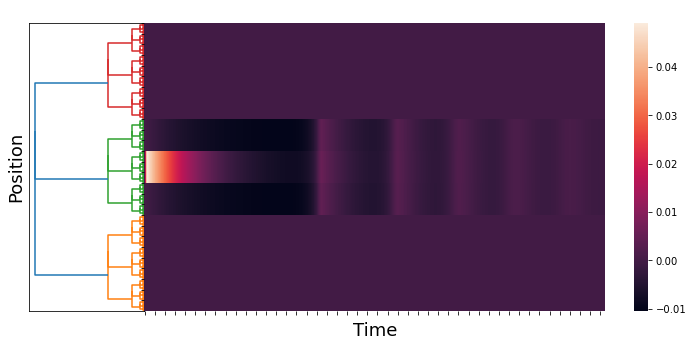}
\end{center}
\caption{An approximation of $E(x,t)$. We take $Q=0$, $\delta=5$. The time
axis goes from $0$ to $100$. The figure shows the response of the network to a brief localized stimulus (the \ pulse given in (\ref{Eq_Pulse_1})). The response is also a pulse. This result is consistent with the numerical results in \cite[Section 2.2.1, Figure 3]{Wilson-Cowan 1}.}%
\label{Figure 4}%
\end{figure}

\begin{figure}[h]
\begin{center}
\includegraphics[
height=1.8991in,
width=2.597in
]{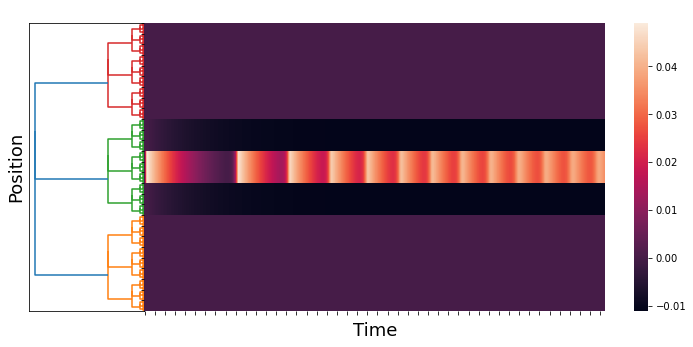}
\end{center}
\caption{An approximation of $E(x,t)$. We take $Q=0$, $\delta=100$. The time
axis goes from $0$ to $100$. The figure shows the response of the network to a
maintained stimulus (see (\ref{Eq_Pulse_1})). The response is a pulse train.
This result is consistent with the numerical results in \cite[Section 2.2.5,
Figure 7]{Wilson-Cowan 1}.}%
\label{Figure 5}%
\end{figure}

\begin{figure}[h]
\begin{center}
\includegraphics[
height=1.8991in,
width=2.5417in
]{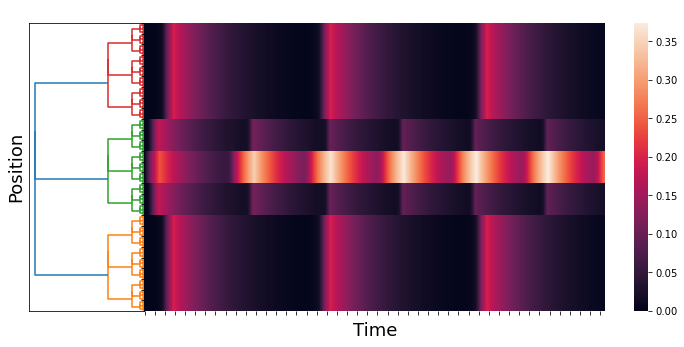}
\end{center}
\caption{An approximation of $E(x,t)$. We take $Q=-30$, $\delta=100$. The time
axis goes from $0$ to $100$. The figure shows the response of the network to a
maintained stimulus (see (\ref{Eq_Pulse_1})-(\ref{Eq_Pulse_2})). The response
is a pulse train in space and time. This result is consistent with the
numerical results in \cite[Section 2.2.7, Figure 9]{Wilson-Cowan 1}.}%
\label{Figure 6}%
\end{figure}
%EndExpansion

\newpage
\subsection{Numerical simulation \ 2}
In \cite[Section 3.3.1]{Wilson-Cowan 2}, Wilson and Cowan applied their model to the spatial hysteresis in the one-dimensional tissue model. In this experiment, a human subject was exposed to a binocular stimulus. The authors
used sharply peaked Gaussian distributions to model the stimuli. The two stimuli were symmetrically moved apart by a small increment and re-summed, and the network response was allowed to reach equilibrium.

Initially, the two peaks (stimuli) were very closed; the network response
consisted of a single pulse (peak), see \cite[Section 3.3.1, Figure
13-A]{Wilson-Cowan 2}. Then, the peaks separated from each other (i.e., the
disparity between the two stimuli increased). The network response was a pulse
in the middle of the binocular stimulus until a critical disparity was
reached. At this stimulus disparity, the single pulse (peak) decayed rapidly
to zero, and twin response pulses formed at the locations of the now rather
widely separated stimuli, see \cite[Section 3.3.1, Figure 13-B]{Wilson-Cowan
2}.

Following this, the stimuli were gradually moved together again in the same
form until they essentially consisted of one peak. But the network response
consisted of two pulses, see \cite[Section 3.3.1, Figure 13-C]{Wilson-Cowan 2}.

The classical Wilson-Cowan model and our $p$-adic version can predict the
results of this experiment. We use the function%
\begin{equation}
\widetilde{h}_{E}(x,t)=e^{-(30(0.5-\mathfrak{m}(x))-0.5t)^{2}}%
+e^{-(30(0.5-\mathfrak{m}(x))+0.5t)^{2}} \label{Stimuli_1}%
\end{equation}
to model the stimuli in the case where the peaks do not move together, and%
\begin{equation}
h_{E}(x,t)=\widetilde{h}_{E}(x,t)1_{[0,18]}(t)+\widetilde{h}_{E}%
(x,36-t)1_{[18,36]}(t) \label{Stimuli_2}%
\end{equation}
to model the stimuli in the case where the peaks gradually move together. The
function $\mathfrak{m}:\mathbb{Z}_{p}\rightarrow\mathbb{R}$ is the Monna map,
see \cite{Monna}.

Figure \ref{Figure 7} shows the stimuli (see (\ref{Stimuli_1})) and the
network response when the stimuli peaks are gradually separated. The network
response begins with a single pulse. When a critical disparity threshold is
reached, the response becomes a twin pulse. Which is the prediction of the
classical Wilson-Cowan model, see \cite[Section 3.3.1, Figures 13-A,
13-B.]{Wilson-Cowan 2}.

Figure \ref{Figure 8} depicts the stimuli and the network response in the
instance where the stimuli peaks gradually split, and finally move together.
The network response at the end of the experiment consists of twin pulses.
This finding is consistent with that of the classical Wilson-Cowan model,
\cite[Section 3.3.1, Figure 13-C]{Wilson-Cowan 2}.

\begin{figure}[h]
\begin{center}
\includegraphics[width=0.7\textwidth]{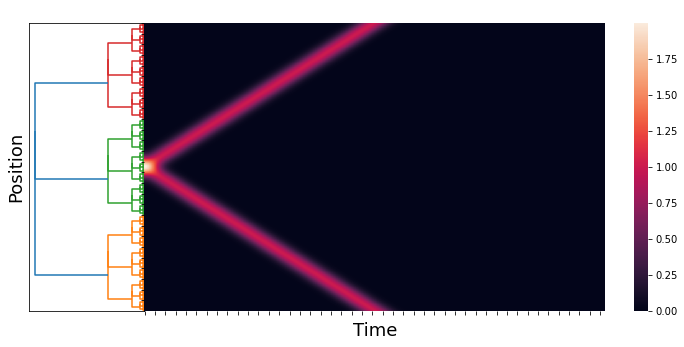}
\includegraphics[width=0.7\textwidth]{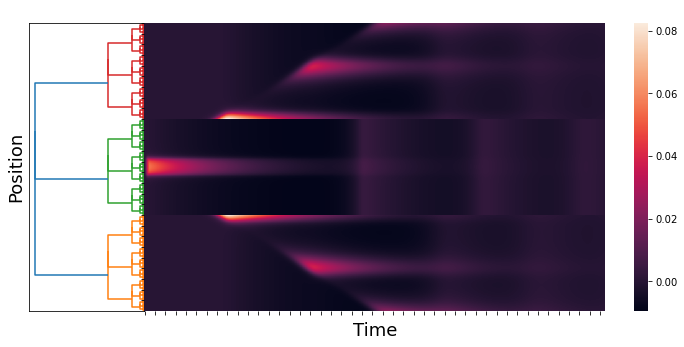}
\end{center}
\caption{An approximation of $\widetilde{h}_{E}(x,t)$ and $E(x,t)$. We take
$h_{I}(x,t)\equiv0$, $p=3$, $l=6$, the kernels $w_{AB}$ are as in the
Simulation 1, and the $h_{E}(x,t)$ as in (\ref{Stimuli_1}). The time axis goes
from $0$ to $60$. The first figure is the stimuli and the second figure is the
response of the network.}
\label{Figure 7}%
\end{figure}

\begin{figure}[h]
\begin{center}
\includegraphics[width=0.7\textwidth]{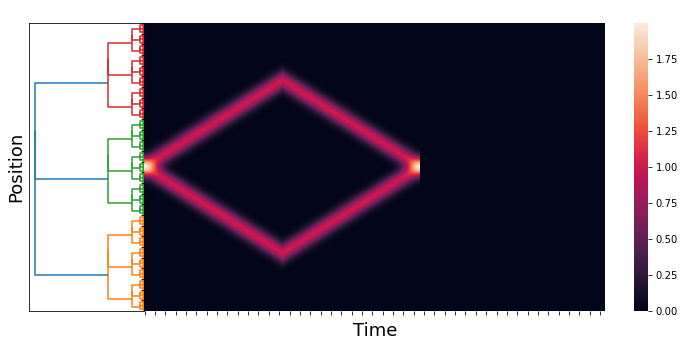}
\includegraphics[width=0.7\textwidth]{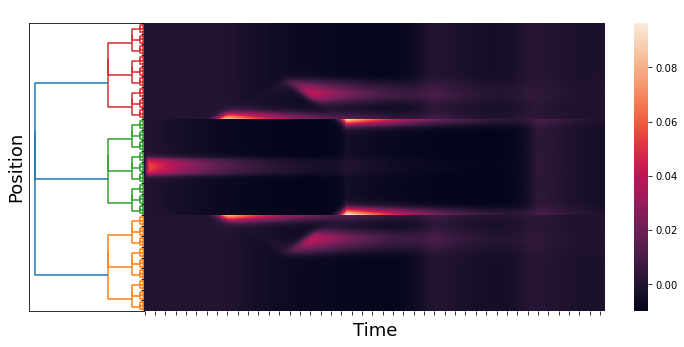}
\end{center}
\caption{An approximation of $h_{E}(x,t)$ and $E(x,t)$. We take $h_{I}(x,t)\equiv0$, $p=3$, $l=6$, the kernels $w_{AB}$ are as in the Simulation 1, and the $h_{E}(x,t)$ as in (\ref{Stimuli_2}). The time axis goes from $0$ to $60$. The first figure is the stimuli and the second figure is the response of the network.}
\label{Figure 8}%
\end{figure}
%EndExpansion

\bigskip
\section{\label{Section_5}$p$-Adic kernels and connection matrices}

There have been significant theoretical and experimental developments in
comprehending the wiring diagrams (connection matrices) of the cerebral cortex
of invertebrates and mammals over the last thirty years, see, for example,
\cite{Sporns et al 1}-\cite{Skoch et al} and the references therein. The
topology of cortical neural networks is described by connection matrices.
Building dynamic models from experimental data recorded in connection matrices
is a very relevant problem.

We argue that our $p$-adic Wilson-Cowan model provides meaningful dynamics on
networks whose topology comes from a connection matrix. Figure \ref{Figure 9}
depicts the connection matrix of the cat cortex, see, e.g., \cite{Scanell et
al 1}-\cite{Scanell et al 2}, and the matrix of the kernel $w_{EE}$ used in
the Simulation 1. The $p$-adic methods are relevant only if the connection
matrices can be very well approximated for matrices coming from
discretizations of $p$-adic kernels. This is an open problem. Here, we show
that such an approximation is feasible for the cat cortex connection matrix.

\begin{figure}[h]
\begin{center}
\includegraphics[width=0.45\textwidth]{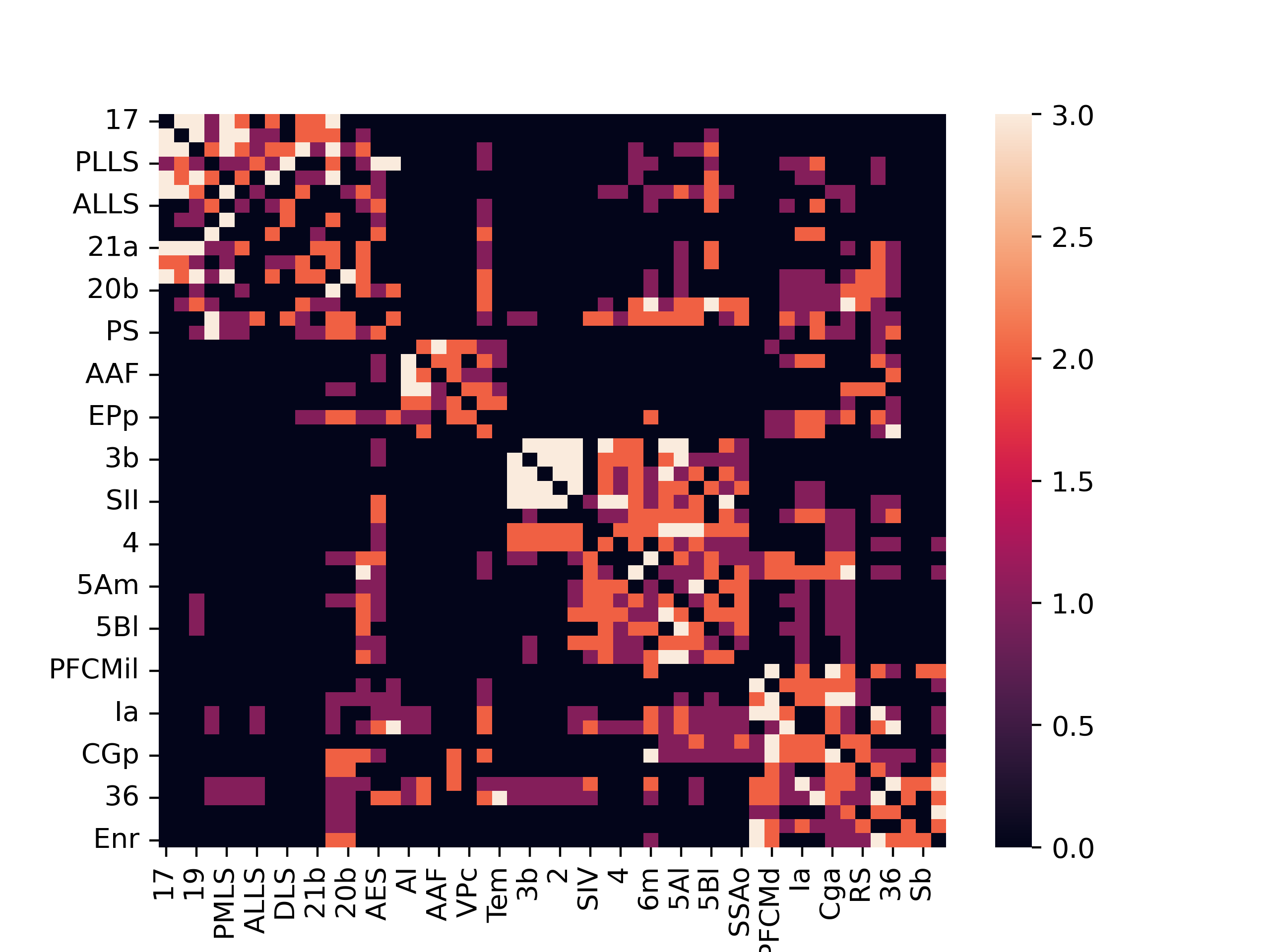}
\includegraphics[width=0.4\textwidth]{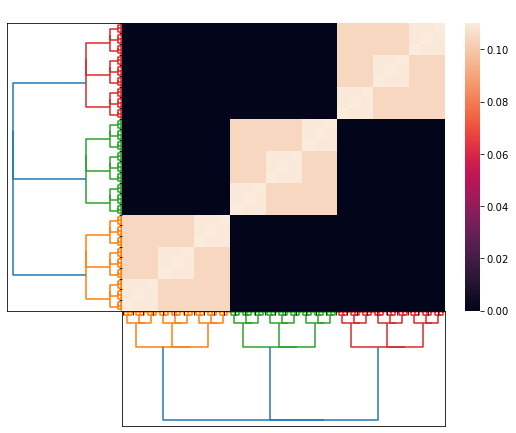}
\end{center}
\caption{The left matrix is the connection matrix of the cat cortex. The right matrix corresponds to a discretization of the kernel $w_{EE}$ use in the simulation 1.}%
\label{Figure 9}%
\end{figure}

\begin{figure}[h]
\begin{center}
\includegraphics[width=0.31\textwidth]{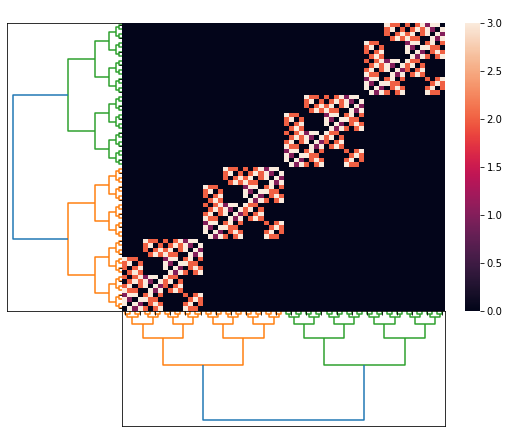}
\includegraphics[width=0.31\textwidth]{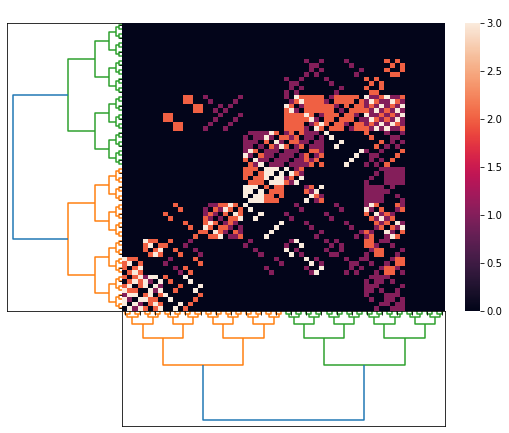}
\includegraphics[width=0.31\textwidth]{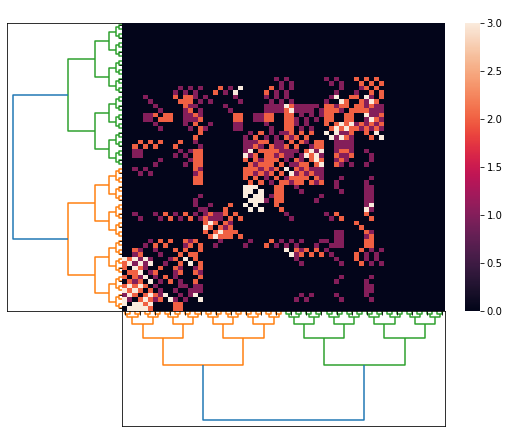}
\end{center}
\caption{Three $p$-adic approximations for the connection matrix of the cat
cortex. We take $p=2$, $l=6$. The first approximation uses $r=0$, the second
$r=3$, and the last $r=5$.}%
\label{Figure 10}%
\end{figure}
%EndExpansion

By using the above procedure, we replace the excitatory-excitatory relation
term $w_{EE}\ast E$ by $\int_{\mathbb{Z}_{p}}K_{r}(x,y)E(y)dy$ but keep the
other kernels as in Simulation 1. For the stimuli we use $h_{E}=3.5\Omega
(p^{2}|x-1|_{p})$, with $p=2$, $l=6$, and $h_{I}(x)=-30$. In the Figure
\ref{Figure 10}, we show three different approximations for the cat cortex
connection matrix using $p$-adic kernels. The black area in the right matrix
in Figure \ref{Figure 10} (which correspond to zero entries) comes from the
process of adjusting the size of origin matrix to $2^{6}\times2^{6}$.

The corresponding $p$-adic network responses are shown in Figure
\ref{Figure 11} for different values of $r$. In the case, $r=0$, the
interaction between the neurons is short-range, while in the case $r=5$, there
is long-range interaction. The response in the case $r=0$ is similar to the
one presented in Simulation 1; see Figure \ref{Figure 6}. When the connection
matrix gets close to the cat cortex matrix (see Figure \ref{Figure 10}), which
is when the matrix allows more long-range connections, the response of the
network presents more complex patterns (see Figure \ref{Figure 11}).

\begin{figure}[h]
\begin{center}
%\includegraphics[
%height=1.1934in,
%width=4.9727in
%]{Figure9.png}
\includegraphics[width=0.4\textwidth]{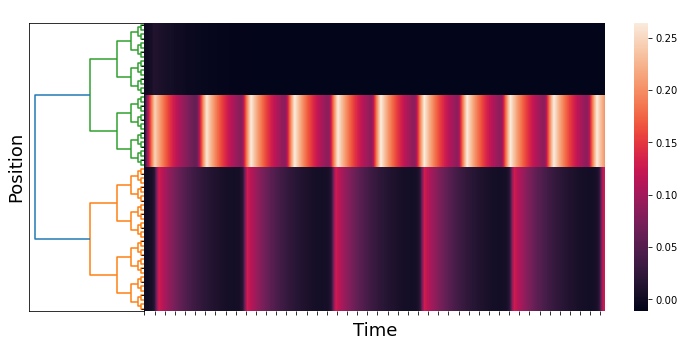}
\includegraphics[width=0.4\textwidth]{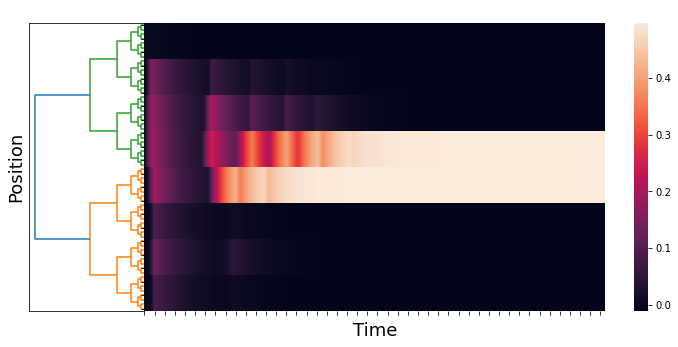}
\includegraphics[width=0.4\textwidth]{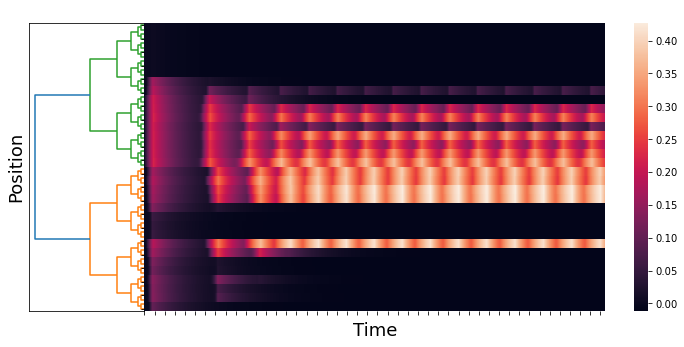}
\end{center}
\caption{We use $p=2$, $l=6$, and time axis goes from $0$ to $150$. The left
image uses $r=0$, the right one uses $r=3$, and the central one uses $r=5$.}%
\label{Figure 11}%
\end{figure}
%EndExpansion

\section{\label{Section_6}Final\ Discussion}

The Wilson--Cowan model describes interactions between populations of  excitatory and inhibitory neurons. This model constitutes a relevant mathematical tool for understanding cortical tissue's functionality.
On the other hand, in the last twenty-five years, there has been tremendous experimental development in understanding the cerebral cortex's neuronal wiring in invertebrates and mammalians. Employing different experimental techniques, the wiring patterns can be described by connection matrices. A such matrix is just an adjacency matrix of a directed graph whose nodes represent neurons, groups of neurons, or  portions of the cerebral cortex. The oriented edges represent the strength of the connections between two groups of neurons.
This work explores the interplay between the classical Wilson-Cowan model and the connections matrices.

Nowadays, it is widely accepted that the networks in the cerebral cortex of mammalians have the small-world property, which means a non-negligible interaction exists between any two groups of neurons in the network. The classical Wilson-Cowan model is not compatible with the small-world property.
We show that the original Wilson-Cowan model can be formulated on any topological group, and the Cauchy problem for the underlying equations of the model is well-posed. We gave an argument showing that the small-world property requires that the group be compact, and consequently, the classical model should be discarded. In practical terms, the classical Wilson-Cowan model cannot incorporate the experimental information contained in the connection matrices. We proposed a $p$-adic Wilson-Cowan model, where the neurons are organized in an infinite rooted tree. We present numerical experiments showing that this model can explain several phenomena like the classical model. The new model can incorporate the experimental information coming from the connection matrices.

\bigskip
\newpage

\end{document}